\documentclass[
 amsmath,amssymb,
]{revtex4-1}

\pdfoutput=1
\usepackage{comment}
\usepackage[utf8]{inputenc}
\usepackage{natbib}
\usepackage{varwidth}
\usepackage[abs]{overpic}
\usepackage{graphicx}
\usepackage{dcolumn}
\usepackage{bm}
\usepackage{float}
\usepackage{color}

\usepackage{qcircuit}
\usepackage{physics}
\usepackage{algorithm}
\usepackage{mathtools}
\usepackage{wrapfig}
\usepackage{dblfloatfix}  
\usepackage[margin=1cm,font=small,labelfont=bf,
   justification=justified,
   format=plain]{caption}
\usepackage{hyperref}

\newcounter{algsubstate}
\renewcommand{\thealgsubstate}{\alph{algsubstate}}

\usepackage[dvipsnames]{xcolor}
\usepackage{algcompatible}
\usepackage{algpseudocode}
\usepackage{tikz}
\usetikzlibrary{calc}
\usepackage{twoopt}
\usepackage[export]{adjustbox}

\usetikzlibrary{fit,tikzmark}

\newcommand\drawCodeBox[2]{%
  \begin{tikzpicture}[remember picture,overlay]
    \coordinate (start) at ([yshift=1.7ex]pic cs:#1);
    \coordinate (end) at ([yshift=-0.3ex]pic cs:#2);
    \node[inner sep=2pt,draw=red,fit=(start) (end)] {};
  \end{tikzpicture}%
}


\makeatletter
\newcommand{\algmargin}{\the\ALG@thistlm}
\makeatother
\newlength{\whilewidth}
\settowidth{\whilewidth}{\algorithmicwhile\ }
\algnewcommand{\parState}[1]{\State%
  \parbox[t]{\dimexpr\linewidth-3\algmargin}{\strut #1\strut}}
\algnewcommand\algorithmicinput{\textbf{INPUT:}}
\algnewcommand\INPUT{\item[\algorithmicinput]}

\algnewcommand\algorithmicoutput{\textbf{OUTPUT:}}
\algnewcommand\OUTPUT{\item[\algorithmicoutput]}

\makeatletter
\newcommand{\skipitems}[1]{%
  \addtocounter{\@enumctr}{#1}%
}

\makeatother

\makeatletter
\tikzset{%
  remember picture with id/.style={%
    remember picture,
    overlay,
    save picture id=#1,
  },
  save picture id/.code={%
    \edef\pgf@temp{#1}%
    \immediate\write\pgfutil@auxout{%
      \noexpand\savepointas{\pgf@temp}{\pgfpictureid}}%
  },
  if picture id/.code args={#1#2#3}{%
    \@ifundefined{save@pt@#1}{%
      \pgfkeysalso{#3}%
    }{
      \pgfkeysalso{#2}%
    }
  }
}

\def\savepointas#1#2{%
  \expandafter\gdef\csname save@pt@#1\endcsname{#2}%
}

\def\tmk@labeldef#1,#2\@nil{%
  \def\tmk@label{#1}%
  \def\tmk@def{#2}%
}

\tikzdeclarecoordinatesystem{pic}{%
  \pgfutil@in@,{#1}%
  \ifpgfutil@in@%
    \tmk@labeldef#1\@nil
  \else
    \tmk@labeldef#1,(0pt,0pt)\@nil
  \fi
  \@ifundefined{save@pt@\tmk@label}{%
    \tikz@scan@one@point\pgfutil@firstofone\tmk@def
  }{%
  \pgfsys@getposition{\csname save@pt@\tmk@label\endcsname}\save@orig@pic%
  \pgfsys@getposition{\pgfpictureid}\save@this@pic%
  \pgf@process{\pgfpointorigin\save@this@pic}%
  \pgf@xa=\pgf@x
  \pgf@ya=\pgf@y
  \pgf@process{\pgfpointorigin\save@orig@pic}%
  \advance\pgf@x by -\pgf@xa
  \advance\pgf@y by -\pgf@ya
  }%
}

\makeatother

\newcounter{mymark}

\usepackage{subcaption}
\usepackage{cleveref}
\usepackage{blindtext}

\usepackage{amsthm}
\theoremstyle{plain}

\usepackage{enumitem}

\newtheorem{defn}{Definition}
\newtheorem{thm}{Theorem}
\newtheorem*{thm*}{Theorem}
\newtheorem{lem}{Lemma}
\newtheorem{meas}{Measurement}

\crefformat{thm}{#2#1#3}
\crefformat{lem}{#2#1#3}
\crefformat{meas}{#2#1#3}
\crefformat{figure}{#2#1#3}
\crefformat{equation}{(#2#1#3)}

\newcommand{\eo}{\mathrm{EO}}
\newcommand{\poly}{\mathrm{poly}}
\newcommand{\ii}{\mathrm{i}}
\newcommand{\ee}{\mathrm{e}}

\newcommand{\ur}{U_{\mathrm{radix}}}

\theoremstyle{definition}

\begin{document}
\setlength{\parindent}{1.0em}

\title[]{Graph comparison via nonlinear quantum search}

\author{M. Chiew, K. de Lacy, C. H. Yu, S. Marsh, J. B. Wang}
\affiliation{Physics Department, The University of Western Australia, Perth, WA 6009, Australia}

\date{\today}

\begin{abstract}

In this paper we present an efficiently scaling quantum algorithm which finds the size of the maximum common edge subgraph for a pair of arbitrary graphs and thus provides a meaningful measure of graph similarity. The algorithm makes use of a two-part quantum dynamic process: in the first part we obtain information crucial for the comparison of two graphs through linear quantum computation. However, this information is hidden in the quantum system with vanishingly small amplitude that even quantum algorithms such as Grover's search are not fast enough to distill the information efficiently. In order to extract the information we call upon techniques in nonlinear quantum computing to provide the speed-up necessary for an efficient algorithm. The linear quantum circuit requires $\mathcal{O}(n^3 \log^3 (n) \log \log (n))$ elementary quantum gates and the nonlinear evolution under the Gross-Pitaevskii equation has a time scaling of $\mathcal{O}(\frac{1}{g} n^2 \log^3 (n) \log \log (n))$, where $n$ is the number of vertices in each graph and $g$ is the strength of the Gross-Pitaveskii non-linearity. Through this example, we demonstrate the power of nonlinear quantum search techniques to solve a subset of NP-hard problems.

\end{abstract}

\keywords{quantum computing, nonlinear quantum search, graph comparison, permutations}
\maketitle

\section{\label{sec:level1}Introduction}

Graph comparison is the task of quantifying the structural similarities between two graph topologies. Possession of a scalable algorithm for calculating a physically meaningful measure of graph similarity would have immediate practical use in any real-world situation involving network analysis or pattern recognition. Graph comparison can be divided into two main categories: comparing graphs with known vertex correspondence, such as detecting network security breaches \cite{Deltacon}, and the more general problem of comparing graphs with unknown vertex correspondence, like pattern recognition and comparing the molecular structure of organic compounds \cite{balavz1986metric}. The latter problem is much harder to solve, requiring us to consider the combinatorially many ways of labelling either graph. Without restricting the graph topologies, e.g. by requiring the graphs to be sufficiently similar \cite{umeyama1988eigendecomposition} or by considering only trees \cite{jiang1995alignment,dehmer2006similarity}, there exists no classically efficient algorithm for evaluating graph similarity on graphs with unknown vertex correspondence. Even quantum algorithms are restricted in their capacity to provide a meaningful measure of graph similarity due to the exponential difficulty of the problem \cite{quantcompRossi}.

Our algorithm  makes use of a two-part quantum dynamic process: in the first part, we encode information crucial for the comparison of the two graphs in a single qubit. In the second, we call upon techniques in nonlinear quantum computing \cite{abrams1998nonlinear,childs2016optimal} to achieve the speed-up necessary to extract this information efficiently. We use the number of edges in the maximum common edge subgraph as a measure of graph similarity, where the maximum common edge subgraph is defined as the subgraph common to both graphs having the maximal number of edges. The problem of finding the maximum common edge subgraph was first introduced by Bokhari \cite{Bokhari1981MCES}, who showed that it is at least as difficult as the graph isomorphism problem \cite{bahiense2012maximum}. The number of edges in the maximum common edge subgraph is a meaningful measure of graph similarity, as it provides a numerical value $\mathbf{S}(G_1,G_2)$ for two graphs $G_1, G_2$ which follows the axioms for effective graph similarity measures laid out in \cite{Deltacon}. Classically, one must find the maximum edge subgraph itself to find the number of edges in the maximum edge subgraph, and so it is beyond exponentially difficult in the number of vertices to calculate this similarity measure.

The paper is structured as follows: in Section \ref{sec:sec1}, we give an overview of the maximum edge subgraph, and discuss why the maximum edge overlap is a meaningful measure of graph similarity but extremely difficult to calculate. In Section \ref{sec:quant}, we represent all $n!$ permutations of the graph vertices using an efficient quantum circuit operating under linear quantum mechanics.
In Section \ref{sec:grov}, we show that, in terms of the number of graph vertices, the maximum edge overlap is exponentially difficult to extract from the output of the circuit in Section \ref{sec:quant}. In Section \ref{sec:nonlin}, our main body of work, we show how to adapt the circuit from Section \ref{sec:quant} to encode information about the maximum edge overlap in a single-qubit quantum state. We then describe how nonlinear quantum dynamics can be used to efficiently extract the information about the maximum edge overlap. Finally, in Section Section \ref{sec:SandO}, we present the full algorithm for graph comparison and show that its computational cost in the worst-case scenario is efficient in the number of graph vertices.

\section{\label{sec:sec1}Maximum edge overlap}

A graph $G(V,E)$ contains a set $V$ of $n$ vertices and a set $E \subseteq V \times V$ of anywhere from 0 to $n(n-1)/2$ edges. A \textit{graph similarity measure} is a function $\mathbf{S}$ that takes two graphs $G_1$ and $G_2$ and returns a similarity score $\mathbf{S}(G_1,G_2) \in [0,1]$ obeying the axioms from \cite{Deltacon}:
\begin{enumerate}
	\item $\mathbf{S}(G_1,G_1) = 1$ for any graph $G_1$ (the identity property),
    \item $\mathbf{S}(G_1,G_2) = \mathbf{S}(G_2,G_1)$ for any two graphs $G_1$ and $G_2$ (the symmetric property), and
    \item $\mathbf{S}(G_1,G_2) \rightarrow 0$ as $n \rightarrow \infty$ (the zero property), where $G_1$ and $G_2$ are the complete and empty graphs on $n$ vertices, respectively.
\end{enumerate}

There are many different ways of defining an exact measure for graph similarity, such as the graph-edit distance \cite{sanfeliu1983distance} and maximum common subgraph. These measures are costly to compute (indeed, finding the maximum common induced subgraph is an NP--hard problem, and finding the maximum common edge subgraph is NP--complete).

Quantum algorithms are by their nature non-deterministic, and hence any quantum procedure to measure graph similarity could be thought of as inexact from the viewpoint of classical computing. However, our algorithm evaluates a graph similarity measure that is classically exact; the only source of non-determinism arises from the nature of quantum measurement, and so we propose that this is in fact an exact measure of graph similarity to within bounded error.

Our measure of graph similarity uses the \textbf{maximum edge overlap}. Given two graphs $G_1$ and $G_2$, each with $n$ labelled vertices, we define the \textit{edge overlap} to be
\begin{equation}\label{eqn:eo}
	\eo(G_1,G_2)=|G_1 \cap G_2|\, ,
\end{equation}
where the intersection $G_1 \cap G_2$ is the graph that contains all edges common to both $G_1$ and $G_2$. The expression $|G_1 \cap G_2|$ denotes the number of edges in this intersection. The \textit{maximum edge overlap} is
\begin{equation}\label{eqn:meo}
\mathbf{MEO}(G_1,G_2) = \max_{\sigma \in S_n} \eo(G_1,\sigma(G_2))\, ,
\end{equation}
where $S_n$ is the set of permutations on $n$ elements and $\sigma(G_2)$ is the graph $G_2$ with vertices relabelled under the permutation $\sigma \in S_n$. The rest of this paper details a quantum algorithm to find the maximum edge overlap for any given pair of graphs. Tweaking Eq.\cref{eqn:meo} slightly results in a measure for graph similarity, $\mathbf{S}$, which abides by the axioms set out in Section \ref{sec:level1}:
\begin{equation}\label{eqn:sim}
    \mathbf{S}(G_1,G_2) = \frac{\mathbf{MEO}(G_1,G_2)}{\max\{|G_1|,|G_2|\}}\, .
\end{equation}
The edge overlap of $G_1$ and any given permutation $\sigma(G_2)$ of $G_2$ may be efficiently calculated. For example,
\begin{align}
    \label{eqn:h2}	\eo(G_1,\sigma(G_2)) &=
    \sum\limits_{i,j=1}^n \left(A_1\right)_{\sigma(i) \sigma(j)}  \left(A_2\right)_{ij}\, ,
\end{align}
where 
$A_i$ is the adjacency matrix for graph $G_i$. This can be calculated in $\mathcal{O}(n^2)$ time on a classical computer. 
Thus the task of finding $\mathbf{S}(G_1,G_2)$ is now the task of finding
\begin{equation} \label{eqn:shrln}
	\mathbf{MEO}(G_1,G_2) = \max_{\sigma \in S_n} \sum\limits_{i,j=1}^n \left(A_1\right)_{\sigma(i) \sigma(j)}  \left(A_2\right)_{ij}\, . 
\end{equation}

Consider the black and grey nine-vertex graphs $G_1$ and $G_2$ shown overlaid in Fig. \cref{fig:testgraph}. We can see intuitively that $G_1$ and $\sigma(G_2)$ will share nine edges for any permutation $\sigma$ that maximises $\eo(G_1,\sigma(G_2))$, and so we know that $\mathbf{MEO}(G_1,G_2)=9$.

\begin{figure}[H]
	\centering
    \includegraphics[scale=0.5]{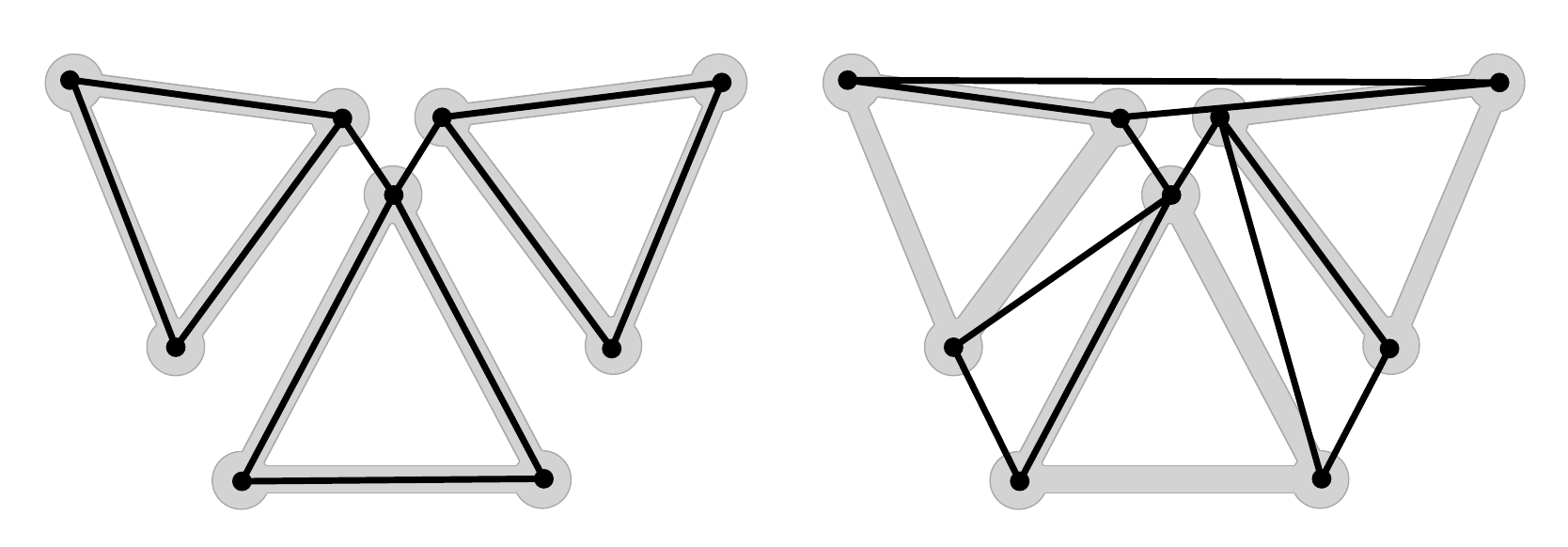}
    \caption{Edge matchings between $G_1$ (grey) and two different vertex permutations of $G_2$ (black). Left: the first permutation of $G_2$ gives the maximum edge overlap, 9. Right: a non-optimal permutation of $G_2$ gives an edge overlap of 3.}
    \label{fig:testgraph}
\end{figure}

To compute $\mathbf{MEO}(G_1,G_2)$ via Eq.\cref{eqn:shrln} requires generating and storing a list of the 9! permutations of the vertices of $G_2$. Each permutation $\sigma$ is then calculated via Eq.\cref{eqn:h2}, and we sort through the $9!$ resulting edge overlaps $\{\eo(G_1,\sigma(G_2)) \, | \, \sigma \in S_n\}$ to find the maximum.

\begin{figure}[H]
    \centering
    \includegraphics[width=0.45\columnwidth,valign=t]{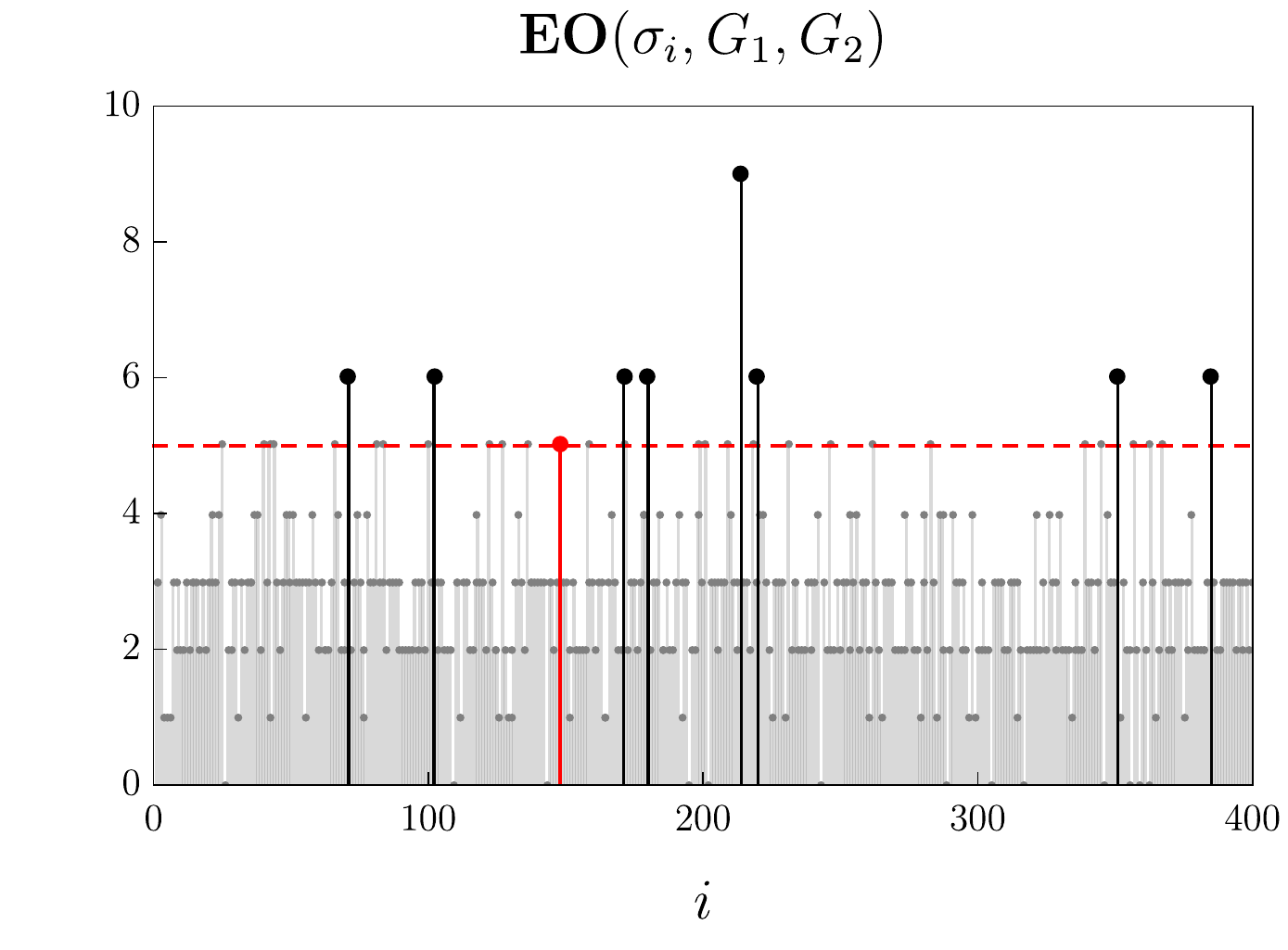}
    \qquad
    \includegraphics[width=0.43\columnwidth,valign=t]{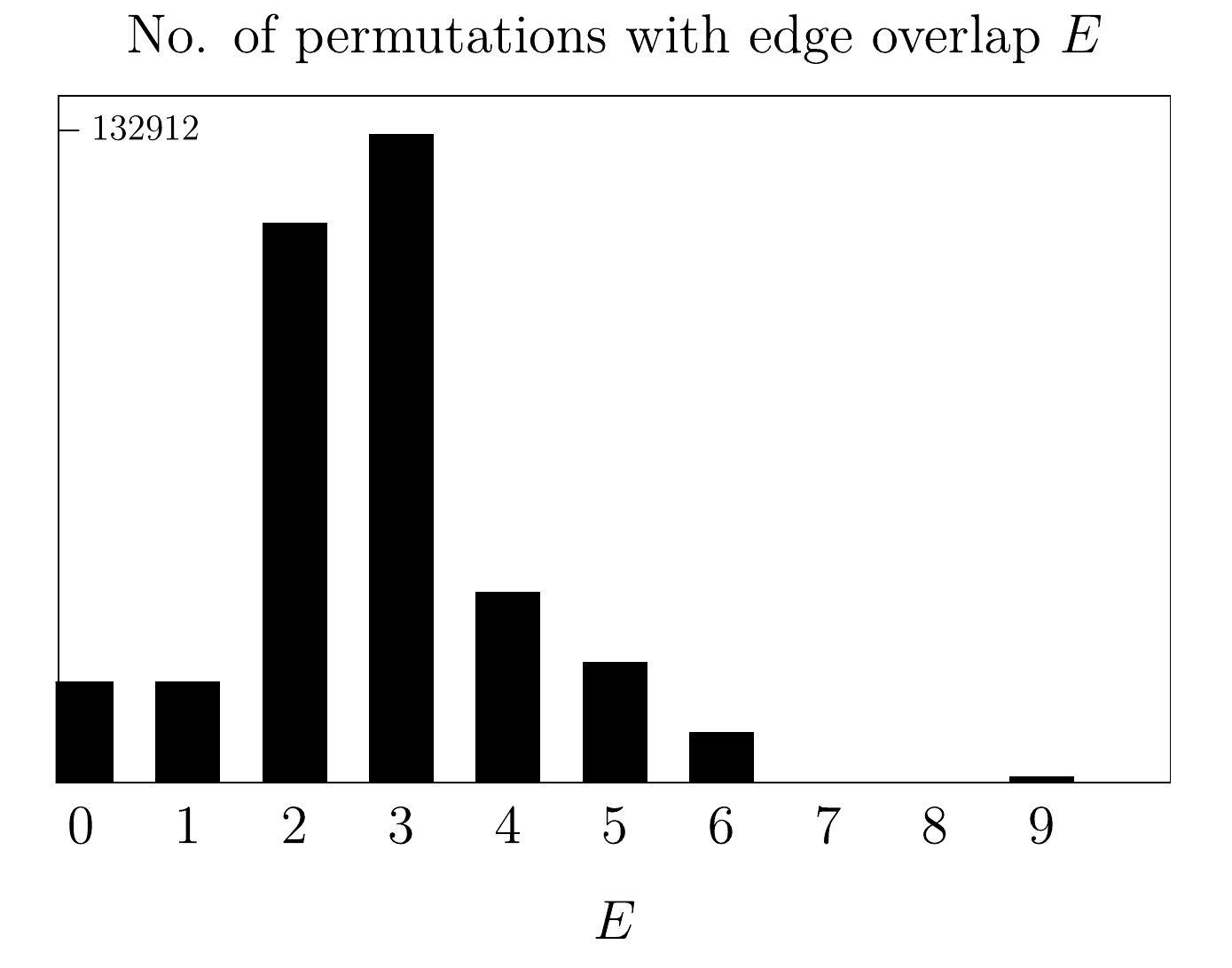}
    \caption{Left: the edge overlaps of 400 permutations of the pair of graphs in Fig. \cref{fig:testgraph}, demonstrating how few exceed an example threshold overlap $E=5$. Right: distribution of edge overlaps for the pair of graphs shown in Fig. \cref{fig:testgraph}.}
    \label{fig:evsbij}
\end{figure}

Fig. \cref{fig:evsbij} shows 400 such edge overlaps, of which only one has the maximum value, 9. Fig. \cref{fig:evsbij} shows a sample of 400 of the edge overlaps, along with a histogram of all 9!. In total, 1296 of the $9!$ permutations give the maximum edge overlap of 9, whose column is barely noticeable in the right figure.



To calculate $\mathbf{MEO}(G_1,G_2)$ via Eq.\cref{eqn:shrln} classically requires evaluating all $n!$ different values for the edge overlap -- one for each $\sigma \in S_n$. A straightforward implementation of Eq.\cref{eqn:h2} has a complexity of $\mathcal{O}(n^2)$, leading to the calculation of maximum edge overlap having complexity of $\mathcal{O}(n! \, n^2)$. The requirement to iterate $n!$ times dominates this expression, and so $\mathbf{MEO}(G_1,G_2)$ cannot be computed efficiently via classical means. To circumvent the need to calculate edge overlaps $n!$ separate times, we make a foray into the realm of quantum computing.


\section{A quantum circuit to store permutations for optimisation}\label{sec:quant}

Quantum computing presents us with the powerful tool of quantum superposition, lacking in any classical device. We can bypass the classical hurdle of calculating all $n!$ edge overlaps separately through the generation and manipulation of a quantum superposition of states representing all $n!$ permutations, $\sum_{\sigma \in S_n} \ket{\sigma}/\sqrt{n!}$, where each $\ket{\sigma}$ is a quantum state that encodes relevant information about the permutation $\sigma \in S_n$. We can store this superposition state using $\log(n!)=\mathcal{O}(n \log (n))$ qubits.

It is standard practice in quantum optimisation algorithms, given a superposition of quantum states as input, to specify a threshold and mark all terms that exceed the threshold \cite{durr1996quantum,baritompa2005grover}. In our case, the threshold is a user-controlled value $E$ with $0 \leq E \leq \min\{|G_1|,|G_2|\}$. We mark the terms on an ancillary single-qubit register which is conditionally mapped from $\ket{0}$ to $\ket{1}$ if the term containing $\ket{\sigma}$ has $\eo(\sigma)>E$, where $\eo(\sigma)$ is an abbreviation for $\eo(G_1,\sigma(G_2))$.
The procedure to implement this marking uses three quantum registers containing at most $2(n+1)\log n$ qubits and $\mathcal{O}(n^2 \log^2(n))$ fundamental quantum gates, and produces the mapping 
\begin{equation}\label{eqn:sec2}
\begin{aligned}
	\ket{0}^{\otimes \mathcal{O}(\log(n))} \otimes \ket{1}\ket{2}\dots \ket{n} \otimes \ket{0}^{\otimes \Theta(n \log(n))} \ket{0} \longmapsto \ket{0}^{\otimes \mathcal{O}(\log(n))} \ket{1}\ket{2}\dots\ket{n} \ur \left(\frac{1}{\sqrt{n!}}\sum_{\sigma \in S_n} \ket{\sigma}_{\mathrm{radix}} \ket{0}\right)\, ,
\end{aligned}
\end{equation}
where\begin{equation}\label{eqn:desire}
\ur\left(\frac{1}{\sqrt{n!}}\sum_{\sigma \in S_n} \ket{\sigma}_{\mathrm{radix}} \ket{0}\right) \coloneqq \frac{1}{\sqrt{n!}}\left(\sum_{\eo(\sigma)\leq E} \ket{\sigma}_{\mathrm{radix}} \ket{0} + \sum_{\eo(\sigma) > E} \ket{\sigma}_{\mathrm{radix}}\ket{1}\right)\, .
\end{equation}
Reverse computation on the first two registers, performed as part of the process in Eq.\cref{eqn:sec2}, ensures that the mapping only alters the third and single-qubit ancilla registers. These registers contain the marked superposition state we desire, shown in Eq.\cref{eqn:desire}. The state $\ket{\sigma}_{\mathrm{radix}}$ is of a particular form within which we encode the permutation $\sigma \in S_n$ on a quantum register, distinct from $\ket{\sigma}$. Read left-to-right, the state in Eq.\cref{eqn:sec2} is comprised of $\mathcal{O}(\log(n))$-, $\mathcal{O}(n \log(n))$-, $\Theta(n\log(n))$- and single-qubit systems respectively, requiring $\mathcal{O}(n\log (n))$ qubits in total across the four registers. Fig. \ref{fig:threshold} visually breaks down the different stages of the mapping in Eq.\cref{eqn:sec2} into their separate components.

\begin{figure}[H]
	\centering
    \hspace{-0.05\columnwidth}
    \includegraphics[width=0.8\columnwidth]{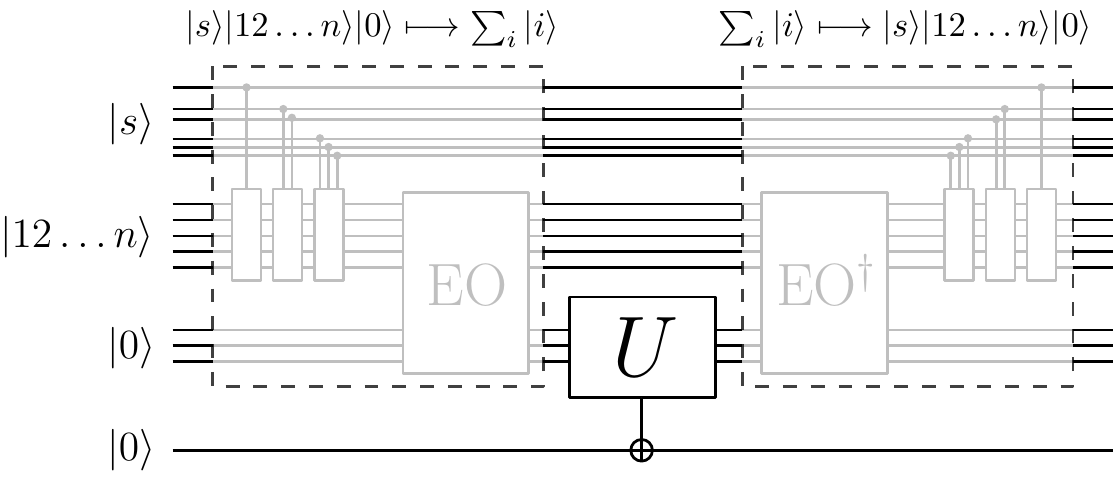}
    \caption{The circuit we construct in Section \ref{sec:quant} to mark permutations with edge overlap greater than some threshold $E$.}
    \label{fig:threshold}
\end{figure}

Note that the end result of Eq.\cref{eqn:sec2} is to implement a controlled-NOT gate on the ancillary qubit conditioned by the parity of $(E-\eo(\sigma_i))$ for each term $\ket{\sigma_i}_{\mathrm{radix}}$ in the third register. If the reader is willing to assume the existence and efficiency of such a quantum process, they may now proceed directly to Section \ref{sec:grov}. Due to the nontrivial task of representing the elements of $S_n$ on a quantum register, we believe it is necessary to provide the reader with our method of implementing Eq.\cref{eqn:sec2}. 
What follows is a procedure similar to the algorithm for permutation generation demonstrated in \cite{Abrams1997perms}.

\subsection{Sub-algorithm for generating a superposition of all permutations}

Every permutation $\sigma \in S_n$ can be represented in what is known as the mixed radix system \cite{sedgewick1977perms} by an array $C[1], C[2], C[3], \dots, C[n]$, where the value $C[i]$ is defined to be the number of elements smaller than $i$ which appear to the left of $i$ in the list
\begin{equation}\label{eqn:bijrep}
	\{\sigma(1), \sigma(2), \sigma(3), \dots, \sigma(n)\}.
\end{equation}

As an example, take the cyclic permutation $\sigma = (1\,2\,4\,6\,5\,3) \in S_6$ which maps the list $(1,2,3,4,5,6)$ to $(2,4,1,6,3,5)$. Thus $C[2]=0$, $C[4]=1$, $C[1]=0$, $C[6]=3$, $C[3]=2$, $C[5]=4$, and so $\ket{\sigma}_{\mathrm{radix}}=\ket{0}\ket{0}\ket{2}\ket{1}\ket{4}\ket{3}=\ket{002143}$. Notice that the radix numbering systems lends itself to a natural ordering of permutations according to the increasing decimal value of their radix numbers, and it is according to this numbering scheme that we define $\sigma_i$ to be the $i$th permutation (i.e. $\ket{\sigma_{1}}_{\mathrm{radix}} = \ket{000000}, \ket{\sigma_{2}}_{\mathrm{radix}}=\ket{000001}$, and so on).

The representation $\ket{\sigma_i}_{\mathrm{radix}}$, while helpful in enumerating permutations, does not help us with subsequent calculation of $\eo(G_1,\sigma_i(G_2))$ via the formulation in Eq.\cref{eqn:h2}. To that end, we need access to the values $\{\sigma_i(1), \sigma_i(2), \dots, \sigma_i(n)\}$. We achieve this through introduction of a second register containing $n$ subsystems of $\log(n)$ qubits. The second register is shown in Fig. \ref{fig:threshold}, and in more detail in Fig. \ref{fig:permucirc}. Across all of its subsystems, it initially holds the product state $\ket{1}\ket{2}\dots\ket{n}$ before we map it to $\ket{\sigma_i(1)}\ket{\sigma_i(2)}\dots \ket{\sigma_i(n)}$ via careful conditioning on the permutation $\ket{\sigma_i}_{\mathrm{radix}}$ stored in the first register. The sub-algorithm we present in this section achieves the mapping
\begin{equation}\label{eqn:desired}
\ket{1} \ket{2} \dots \ket{n} \otimes \ket{0}^{\otimes \Theta(n \log(n))} \longmapsto \frac{1}{\sqrt{n!}} \sum_{i=1}^{n!} \big(\ket{\sigma_i(1)}\ket{\sigma_i(2)}\,\dots\,\ket{\sigma_i(n)}\big) \otimes  \ket{\sigma_i}_{\mathrm{radix}}\, .
\end{equation}
With this step complete, we can use the second register to calculate all of the edge overlaps on a third register. Before proceeding, however, we prove Eq.\cref{eqn:desired} and show that it takes time $\mathcal{O}(n^2\log^2(n))$ in terms of fundamental quantum gates. Perform the following steps:

\begin{enumerate}
	\item Generate each of the $n$ states $|0\rangle$, $\frac{|0\rangle+|1\rangle}{\sqrt{2}}$, $\dots$, $\frac{|0\rangle+|1\rangle+\dots+|n-1\rangle}{\sqrt{n}}$ on the first register, which consists of $n$ systems of appropriate size with each system initiated in its respective $\ket{0}$ state. The systems storing $\ket{0}$ and $\frac{1}{\sqrt{2}}(\ket{0}+\ket{1})$ each require one qubit, storing $\frac{1}{\sqrt{3}}(\ket{0}+\ket{1}+\ket{2})$ requires $\lceil\log(3)\rceil=2$ qubits, and so on. The final system requires $\lceil\log(n)\rceil$ qubits, and thus the total number of qubits in the first register is $\Theta(n \log(n))$. Details on how to generate each of these states from the state $\ket{0}^{\otimes \Theta(n \log(n))}$ (a process denoted by $G$ in Fig. \ref{fig:threshold}) are explained in Section \ref{sec:subalgtime}.
Considered as one whole register (register \raisebox{.5pt}{\textcircled{\raisebox{-.9pt} {1}}} in Fig. \cref{fig:permucirc}), we now have the state
 \begin{align*}
\ket{s} \coloneqq \frac{1}{\sqrt{n!}} \sum_{\alpha_2=0}^1\dots\sum_{\alpha_n=0}^{n-1}\ket{0}\ket{\alpha_2}\dots\ket{\alpha_n} = \frac{1}{\sqrt{n!}} \sum_{C[2]=0}^1\dots\sum_{C[n]=0}^{n-1}\ket{0}\ket{C[2]}\dots\ket{C[n]} = \frac{1}{\sqrt{n!}} \sum_{i=1}^{n!} \ket{\sigma_i}_{\mathrm{radix}}\, .
\end{align*}

\item Initiate the second register, which consists of $n$ sub-systems, in the state $\ket{1}\ket{2}\dots\ket{n}$. This represents the elements of the set $\{1,2,\dots,n\}$, and shown on register \raisebox{.5pt}{\textcircled{\raisebox{-.9pt} {2}}} in Fig. \cref{fig:permucirc}. These systems each need $\log(n)$ qubits as they will be permuted amongst each other, rendering the size of the entire second register as $n \log (n)$ qubits. Combining the first and second register now, we have
 \begin{equation}
 \begin{aligned}
\frac{1}{\sqrt{n!}} \sum_{i=1}^{n!} \ket{1} \ket{2} \dots \ket{n} \otimes  \ket{\sigma_i}_{\mathrm{radix}}\, .
\end{aligned}
\end{equation}

\item According to Hall's strategy \cite{sedgewick1977perms}, we perform a sequence of unitary operations $P_2,P_3,\dots,P_n$ in order on the two-register state, where $P_i$'s action on the first register is limited to the $i$th subsystem and is defined to be
\begin{align}\label{eqn:pdef}
P_i &=\prod_{l=1}^{i-1} \left(S_{l,l+1}\bigotimes \sum_{j=0}^{l-1}|j\rangle\langle j|+\mathbb{I}\, \bigotimes \sum_{j=l}^{2^{\lceil \log_2(i)\rceil}-1}|j\rangle\langle j|\right)\\ \label{eqn:pdef2}
&=\sum_{j=0}^{i-2}\left( \prod_{l=j+1}^{i-1} S_{l,l+1} \bigotimes |j\rangle\langle j|\right)+\sum_{j = i-1}^{2^{\lceil \log_2(i)\rceil}-1}\left(\mathbb{I} \, \bigotimes |j\rangle\langle j|\right)\, .
\end{align}
Here $S_{l,l+1}$ is the SWAP operation that permutes the $l$th and  $(l+1)$th systems of the second register. Note that $\prod_{l=j+1}^{i-1} S_{l,l+1}=S_{j+1,j+2}S_{j+2,j+3}\dots S_{i-1,i}$ is a series of SWAP operations that shifts the $i$th system of the second register down to the $(j+1)$th system and increases the indices of all intermediary systems by one. Once this is done, we finally obtain the superposition state
\begin{equation}\label{eqn:radixperms}
 \begin{aligned}
\frac{1}{\sqrt{n!}}\sum_{i=1}^{n!} \ket{\sigma_i(1)} \ket{\sigma_i(2)} \dots \ket{\sigma_i(n)} \otimes \ket{\sigma_i}_{\mathrm{radix}}\, ,
\end{aligned}
\end{equation}
which is the same as what was required in Eq.\cref{eqn:desired}.
\end{enumerate}

\begin{center}
\begin{figure}[H]
	\centering
	\includegraphics[scale=1.0]
	{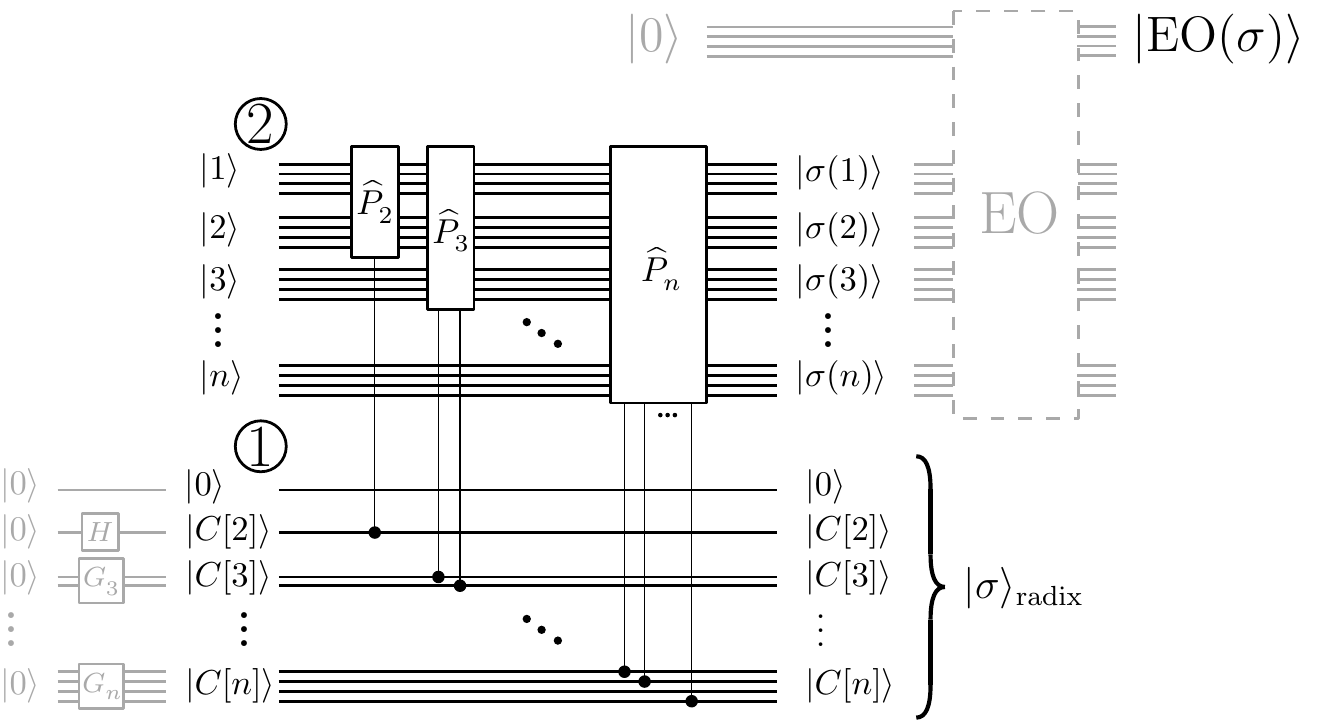}
	\caption{The black circuit performs the mapping $\ket{1}\ket{2}\dots\ket{n} \ket{\sigma}_{\mathrm{radix}} \mapsto \ket{\sigma(1)}\ket{\sigma(2)}\dots\ket{\sigma(n)} \ket{\sigma}_{\mathrm{radix}}$. The second register \raisebox{.5pt}{\textcircled{\raisebox{-.9pt} {2}}} stores the elements of $\{1,2,\dots,n\}$, and is conditionally permuted by register \raisebox{.5pt}{\textcircled{\raisebox{-.9pt} {1}}}. Register \raisebox{.5pt}{\textcircled{\raisebox{-.9pt} {1}}} can be preceded by the gate $G$, as shown, which replaces $\ket{C[2]}$ with $(\ket{0}+\ket{1})/2$, $\ket{C[3]}$ with $(\ket{0}+\ket{1}+\ket{2})/\sqrt{3}$ and so on in order to generate the state $\ket{s}=\sum_{\sigma \in S_n} \ket{\sigma}_{\mathrm{radix}}/\sqrt{n!}$. If this circuitry is included, then the black circuit will output $\sum_{\sigma \in S_n}\ket{\sigma(1)}\ket{\sigma(2)}\dots\ket{\sigma(n)} \ket{\sigma}_{\mathrm{radix}} $. Here, the controlled-$\hat{P}_i$ $(2 \leq i \leq n)$ operation corresponds to $P_i$. Overall, this circuit prepares the second register in an appropriate input format for a quantum gate implementing the calculation of edge overlap.} 
	\label{fig:permucirc} 
\end{figure}
\end{center}

It is perhaps easiest to understand Eq.\cref{eqn:pdef} and Eq.\cref{eqn:pdef2} by writing the first few operations out:
\begin{align*}
	P_2 &= S_{1,2} \otimes \ket{0}\bra{0} + \mathbb{I} \otimes \ket{1}\bra{1}\, , \\
    P_3 &= S_{1,2} S_{2,3} \otimes \ket{0}\bra{0} + S_{2,3}\otimes \ket{1}\bra{1} +  \mathbb{I} \otimes \big( \ket{2}\bra{2} + \ket{3}\bra{3} \big)\, , \\
    P_4 &= S_{1,2}S_{2,3}S_{3,4} \otimes \ket{0}\bra{0} + S_{2,3}S_{3,4}\otimes \ket{1}\bra{1} + S_{3,4} \otimes \ket{2}\bra{2} + \mathbb{I} \otimes \ket{3}\bra{3} \, ,
\end{align*}
and so on. The sequence $P_2, P_3, \dots, P_n$ implements the permutation encoded in the first register upon the representation of $\{1,2,\dots,n\}$ stored in the second register. For example, let $n=6$ and take one of the superposition terms, $\ket{123456} \otimes \ket{\sigma}_{\mathrm{radix}}$ where $\ket{\sigma}_{\mathrm{radix}}=\ket{002143}$. Then,
 \begin{eqnarray*}
\ket{123456} \ket{002143}  && \xrightarrow{P_2} \ket{213456}\ket{002143} \xrightarrow{P_3} \ket{213456} \ket{002143} \xrightarrow{P_4} \ket{241356} \ket{002143}  \\
&&\xrightarrow{P_5} \ket{241356} \ket{002143} \xrightarrow{P_6} \ket{241635} \ket{002143}\, .
\end{eqnarray*}

\subsubsection{Time complexity of sub-algorithm}\label{sec:subalgtime}

The time complexity of this algorithm is evaluated by estimating the number of basic one-qubit and two-qubit gates. For each integer $k$ in the interval $[2,n]$, the gate $G$ in Step 1 generates the state $(|0\rangle+|1\rangle+\dots+|k-1\rangle)/\sqrt{k}$ in a subsystem of $\lceil \log_2(k) \rceil $ qubits. If $k$ is a power of $2$, this state can be easily generated via $H^{\otimes \lceil \log_2(k)\rceil}$, where $H$ is the Hadamard gate on one qubit. If $k$ has any other value, the state can be generated from the corresponding $\ket{0}$ state of the same dimension via generalized amplitude amplification (Theorem 4 of \cite{brassard2000amplampl}). On each subsystem of $\lceil \log_2(k) \rceil$ qubits, we employ the unitary operation $G_k=-H^{\otimes \lceil \log_2(k) \rceil}S_0(\phi)H^{\otimes \lceil \log_2(k) \rceil}S_\chi(\varphi) H^{\otimes \lceil \log_2(k)\rceil}$, where the conditional phase shifts $S_0$ and $S_{\chi}$ are defined by
\begin{equation}\label{eqn:s0}
S_0(\phi)\ket{j} = \begin{cases} \ee^{\ii \phi}\ket{j} & j = 0\\ \ket{j} & j \neq 0\end{cases}
\end{equation}
and
\begin{equation}\label{eqn:sx}
S_\chi(\varphi)\ket{j} = \begin{cases} \ee^{\ii \varphi}\ket{j} & 0\leq j < k\\ \ket{j} & k \leq j < n\, .\end{cases}
\end{equation}
Both $S_0$ and $S_{\chi}$ can be implemented efficiently, using $\mathcal{O} (\lceil \log(k) \rceil)$ gates, via a technique called phase kick-back \cite{cleve1998quantum,lee2015generalised}. To ensure $G_k$ performs the exact required rotation, a condition arises restricting the values of $\varphi$ and $\phi$ \cite{brassard2000amplampl}, from which valid numerical values for these parameters can be determined. Because the proportion of desirable amplitudes $k/n$ is greater than 1/2, no prior Grover iterations (as discussed in \cite{brassard2000amplampl}) are required and we simply have $G_k\ket{0}=(|0\rangle+|1\rangle+\dots+|k-1\rangle)/\sqrt{k}$. Since both $S_\chi(\varphi)$ and $S_0(\phi)$  take time $\mathcal{O}(\lceil \log(k) \rceil)$ \cite{Barenco1995elementary}, it takes time $\mathcal{O}(\log(k))$ to implement $G_k$ and total time $\mathcal{O}(n\log(n))$ to implement all of $G$.

In Step 2, the state $|12\dots n\rangle$ can be generated from  $|00 \dots 0\rangle$ using $\mathcal{O}(n\log (n))$ one-qubit gates. From Eq.~\eqref{eqn:pdef}, we can see that $P_i$ takes $i-1$ sequential controlled-SWAP operations and each SWAP operation uses $\mathcal{O}(\log(n)+\log(i))=O(\log(n))$ fundamental gates \cite{Barenco1995elementary,Berry2018Comparator}, so Step 3 takes time 
$\mathcal{O}\left(\log(n)(1+2+\dots+n-1)\right)=\mathcal{O}(n^2\log(n))$. Combining all three steps, preparing a superposition of all permutations takes time $\mathcal{O}(n\log (n)+n\log (n)+n^2\log(n))=\mathcal{O}(n^2\log(n))$, from which we can see that Step 3 dominates the runtime of this sub-algorithm. As discussed, both registers require at most $n \log(n)$ qubits.

Note that the complexity being greater than $\mathcal{O}(n \log(n))$ shows how this sub-algorithm is necessary -- constructing a superposition of all permutations takes longer than it does to construct the much simpler $\sum_{i=1}^{n!}\ket{i}$, which would only take $\mathcal{O}(\log(n!)) = \mathcal{O}(n \log (n))$ time. Intuitively, this is because the latter requires further computation to extract usable information. To proceed any further, we need to be able to calculate the edge overlap for each permutation; merely having an index $i$ for each permutation gets us no closer to our goal.
%

\subsection{Completing the map}

Recall from the start of Section \ref{sec:quant} that we set a threshold edge overlap, $E$. Proceeding from Eq.\cref{eqn:desired}, we introduce a third register of $\log(\min\{|G_1|,|G_2|\})=\mathcal{O}(\log(n))$ qubits initially in the state $\ket{0}^{\otimes \mathcal{O}(\log(n))}$. Our next step towards achieving the map in Eq.\cref{eqn:sec2} is to induce
\begin{equation}\label{eqn:hmaps}
		\frac{1}{\sqrt{n!}}\sum_{i=1}^{n!} \ket{0}^{\otimes \mathcal{O}(\log(n))} \ket{\sigma_i}  \ket{\sigma_i}_{\mathrm{radix}}  \longmapsto \frac{1}{\sqrt{n!}}\sum_{i=1}^{n!} \ket{\eo(\sigma_i)} \ket{\sigma_i}\ket{\sigma_i}_{\mathrm{radix}}\, ,
\end{equation}
where $\ket{\sigma_i} = \ket{\sigma_i(1)\,\sigma_i(2)\,\dots\,\sigma_i(n)}$ and $\ket{\eo(\sigma_i)}$ contains the value $\eo(\sigma_i)$. Note that, given the existence of an efficient mapping $\eo$ on the third and second registers,
\begin{equation}\label{eqn:quantH}
	\eo:\ket{0}^{\otimes \mathcal{O}(\log(n))} \ket{\sigma_i} \longmapsto  \ket{\eo(\sigma_i)} \ket{\sigma_i}\, ,
\end{equation}
Eq.\cref{eqn:hmaps} unfolds routinely. Indeed, the mapping $\eo$ can be implemented efficiently in the quantum regime, as its classical counterpart is efficient. Assuming oracular access to the graphs' adjacency matrices, a straightforward quantum circuit to carry out EO via Eq.\cref{eqn:h2} will use $\mathcal{O}(n^2 \log n)$ fundamental quantum gates. This is done by adding the value of $\left(A_1\right)_{\sigma(i) \sigma(j)}  \left(A_2\right)_{ij}$ to the register storing the edge overlap, for each $i, j = 1 \ldots n$. Each addition can be performed using $\mathcal{O}(\log (n))$ gates \cite{vedral1996}, and there are $n^2$ additions in total. Thus the runtime of the quantum circuit implementing Eq.\cref{eqn:quantH} will scale as $\mathcal{O}(n^2 \log (n))$.

We can use a fourth register consisting of a single ancillary qubit to mark permutations whose edge overlap exceeds the threshold $E$. As shown in Fig. \ref{fig:threshold}, we achieve this with the comparator $(>E)$ gate, which performs
\begin{align} \label{eqn:supmarked}
	\frac{1}{\sqrt{n!}} \sum_{i=1}^{n!} \ket{\eo(\sigma_i)} \ket{\sigma_i} \ket{\sigma_i}_{\mathrm{radix}} \ket{0} \longmapsto \frac{1}{\sqrt{n!}} \bigg(\sum_{\eo(\sigma_i) \leq E} \ket{\eo(\sigma_i)} \ket{\sigma_i} \ket{\sigma_i}_{\mathrm{radix}}\ket{0} + \sum_{\eo(\sigma_i)>E} \ket{\eo(\sigma_i)} \ket{\sigma_i} \ket{\sigma_i}_{\mathrm{radix}}\ket{1} \bigg)\, .
\end{align}
Implementing the comparator gate takes $\mathcal{O}(\log(n))$ fundamental quantum operations and $\mathcal{O}(\log(n))$ additional ancilla qubits, using the method outlined in the supplementary materials of \cite{Berry2018Comparator}. Note that we are nearing the desired output as written in Eq.\cref{eqn:sec2}; all that remains is to reverse the operations done on the second and third registers. This leaves the entire four-register system in the state
\begin{align}
&\ket{0}^{\otimes \mathcal{O}(\log(n))} \otimes \ket{1} \ket{2} \dots \ket{n} \otimes \frac{1}{\sqrt{n!}}\bigg(\sum_{\eo(\sigma)\leq E} \ket{\sigma}_{\mathrm{radix}} \ket{0} + \sum_{\eo(\sigma) > E} \ket{\sigma}_{\mathrm{radix}}\ket{1}\bigg) \notag\\ = &\ket{0}^{\otimes \mathcal{O}(\log(n))} \otimes \ket{1} \ket{2} \dots \ket{n} \otimes \ur \big(\ket{s}\ket{0}\big)\, ,
\end{align}
as required. The reversal of computation merely doubles the fundamental quantum operation count, so the cost of the entire algorithm up to this point is still $\mathcal{O}(n^2 \log^2(n))$. The first register contains $\Theta(n\log(n))$ qubits, the second $n\lceil \log(n) \rceil$, and the third $\lceil \log(n(n-1)/2) \rceil$, and so we require less than $2(n+1)\log(n)=\mathcal{O}(n\log(n))$ qubits. This completes the map described in Eq.\cref{eqn:sec2}.

We have produced a state that encodes all the information derived classically in Fig. \cref{fig:evsbij}. The cost of preparing this state is $\mathcal{O}(n^2 \log n)$ fundamental quantum gates, dominated by preparing the radix-form superposition over all permutations. We have thus bypassed the first of the two hurdles mentioned in the classical attempt at the problem in Section \ref{sec:level1} -- the problem of calculating all $n!$ edge overlaps efficiently.


\section{Failure of linear quantum optimisation protocols to find the maximum edge overlap}\label{sec:grov}

To build an optimisation algorithm to find $\mathbf{MEO}(G_1,G_2)$, we need to be able to efficiently determine if any permutations have edge overlap less than $E$. The quantum state produced by the map in Eq.\cref{eqn:sec2} contains the information we need -- some collection of its terms would reveal the maximum edge overlap upon successful measurement -- but how do we extract them?  In the regime of linear quantum algorithms, standard intuition would have us use D{\"u}rr and H{\o}yer's algorithm \cite{durr1996quantum,baritompa2005grover} for quantum maximum searching, detailed in Algorithm \ref{alg:durr}. 

\begin{algorithm}[H]
\caption{Quantum Maximum Searching Algorithm (insufficient)}
\label{alg:durr}
  \begin{algorithmic}
    \INPUT{$G_1, G_2$}
    \State generate $y \in \{1,\dots,n!\}$ uniformly, and set threshold $E = \eo(\sigma_y)$.
    \State set $p=1$ and $\lambda \in (1, 4/3)$.
    \For {$i=1,2,\dots$ until running time exceeds $22.5\sqrt{n!}+1.4\log^2(n!)$}
    	\State{initialise the three quantum registers in the state $\ket{0}^{\otimes \mathcal{O}(\log(n))}\otimes  \ket{1}\ket{2} \dots \ket{n} \ket{0}^{\Theta(n \log(n))} \ket{0}$.}
        \State{mark permutations $\ket{\sigma_i}$ for which $\eo(\sigma_i)>E$ via the circuit described in Section \ref{sec:quant}.}
        \State{perform a number of Grover rotations on the first register chosen uniformly in the interval $\{0, \dots, \lceil p-1\rceil\}$.}
        \State{observe the first register and measure $\ket{\sigma_k}$. If $\eo(\sigma_k) > E$, set $E = \eo(\sigma_k)$ and $p = 1$. Otherwise, set $p = \lambda p$.}
    \EndFor
    \State \Return $E$
  	\OUTPUT{$E = \mathbf{MEO}(G_1,G_2)$ with probability $>1/2$.}
  \end{algorithmic}
\end{algorithm}

One can repeat Algorithm \ref{alg:durr} to improve the probability of success. 
Note that the maximum runtime of $22.5 \sqrt{n!} + 1.4 \log^2(n!)$ is infeasible, as it is beyond exponential in $n$. Also, as the threshold $E$ narrows down on $\mathbf{MEO}(G_1,G_2)$, it will become exceedingly unlikely that we will ever measure a $\ket{\sigma_k}$ such that $\eo(\sigma_k)>E$: as we are presuming $G_1$ and $G_2$ to be general graphs, we must be prepared to accept that the amplitude of the marked component of the output in Eq.\cref{eqn:sec2} becomes vanishingly small, on the order of $\mathcal{O}(1/\sqrt{n!})$.

(A quick justification of this logic: we cannot assume the number of permutations giving maximum edge overlap to be anything more than constant. For example, the worst-case scenario occurs when comparing two line graphs of $n$ vertices. The graphs are isomorphic, but there are only \textit{two} optimal permutations and so the amplitude of marked terms will never exceed $\sqrt{2/n!}$. Thus, there is no obvious structure or predictability in the problem of calculating $\mathbf{MEO}(G_1,G_2)$ which can be exploited: we must always assume that there is a constant number of MEO permutations amongst all $n!$ other permutations.)

We conclude that the standard quantum computing model is not powerful enough to significantly wrest the difficulty of finding $\mathbf{MEO}(G_1,G_2)$ from inefficiency to that of an algorithm performing in polynomial time, where $G_1$ and $G_2$ are general graphs.

\section{Nonlinear quantum computing}\label{sec:nonlin}

There is, however, a different quantum algorithmic route to finding the maximum edge overlap in $\poly(n)$ time, combining the pioneering work of Abrams and Lloyd \cite{abrams1998nonlinear} and a recently proposed algorithm by Childs and Young \cite{childs2016optimal}, both of which exploit the properties of nonlinear quantum computing.

The potential of nonlinear quantum dynamics in quantum computing is an emergent field of study. In particular, the recent series of publications \cite{meyer2013nonlinear,meyer2014quantum,kahou2013quantum} used the Gross-Pitaevskii dynamics of interacting Bose Einstein condensates to perform Grover’s search at a runtime which scales as $O(\min\{\sqrt{N/g},\sqrt{N}\})$, where $g$ denotes the nonlinearity strength and $N$ the database size (in our case, $N=n!$). Since then, Childs and Young proposed their nonlinear protocol using the same Gross-Pitaevskii dynamics with a runtime scaling as $O(\min \{1/g \log(g N), \sqrt{N}\})$, achieving exponentially faster rates than previous results \cite{childs2016optimal}. In \cite{Kooper}, de Lacy also uses the Gross-Pitaevskii nonlinearity, performing an unstructured search on a complete graph in time  $\mathcal{O}(N/(g\sqrt{m(N-m)}))$, where $m$ is the number of marked terms. Others have applied the concept of nonlinear quantum dynamics to quantum walks \cite{Jason1,Jason2,Jason3,JasonExp}.

For the purposes of our algorithm, the key piece of literature is Childs and Young's work \cite{childs2016optimal}, which explicitly details the capability of nonlinear quantum computing to distinguish between two extremely finely separated single-qubit quantum states (on the order of $1/\exp(n))$). We expand upon their work and implement a new optimisation routine to provide an efficient quantum algorithm for calculating the maximum edge overlap.

\subsection{Groundwork for algorithm: postselection and candidate states}\label{sec:candid}

Before describing the nonlinear mechanism for distinguishing between two single-qubit states, let us describe how such a protocol is relevant to calculating the maximum edge overlap. Start with the map introduced in Section \ref{sec:quant}. From Eq.\cref{eqn:sec2}, proceed by performing the inverse of $G$ on the first register (as depicted in Fig. \ref{fig:hadamardtest}). The effect of this is
\begin{align*}
	\ket{0}^{\otimes \mathcal{O}(\log(n))}& \otimes \ket{1} \ket{2} \dots \ket{n} \otimes \frac{1}{\sqrt{n!}} \bigg(\sum_{\eo(\sigma_i) \leq E} \ket{\sigma_i}_{\mathrm{radix}}\ket{0} + \sum_{\eo(\sigma_i) > E} \ket{\sigma_i}_{\mathrm{radix}} \ket{1} \bigg) \\ \mapsto &\ket{0}^{\otimes \mathcal{O}(\log(n))} \otimes \ket{1} \ket{2} \dots \ket{n} \otimes \bigg[G^{\dagger} \bigg( \frac{1}{\sqrt{n!}} \sum_{\eo(\sigma_i) \leq E} \ket{\sigma_i}_{\mathrm{radix}}\bigg)\ket{0} + G^{\dagger}\bigg(\frac{1}{\sqrt{n!}} \sum_{\eo(\sigma_i) > E} \ket{\sigma_i}_{\mathrm{radix}} \bigg)\ket{1}\bigg]\\
     = &\ket{0}^{\otimes \mathcal{O}(\log(n))} \otimes \ket{1} \ket{2} \dots \ket{n} \otimes \bigg[ \bigg(\frac{n!-m}{n!} \ket{0}^{\otimes \Theta(n\log(n))} + \ket{\Psi'}\bigg)\ket{0} +\bigg(\frac{m}{n!} \ket{0}^{\otimes \Theta(n\log(n))} +\ket{\Psi''}\bigg)\ket{1} \bigg]\, ,
\end{align*}
where $m$ is the number of permutations having edge overlap greater than $E$. $\ket{\Psi'}$ and $\ket{\Psi''}$ are unknown, unnormalised first-register quantum states orthogonal to $\ket{0}^{\Theta(n\log(n))}$ satisfying $|\braket{\Psi'}|=\sqrt{\frac{n!-m}{n!}(1-\frac{n!-m}{n!})}$ and  $|\braket{\Psi''}|= \sqrt{\frac{m}{n!}(1-\frac{m}{n!})}$.

Following in the footsteps of \cite{abrams1998nonlinear}, we postselect the first register by the state $\ket{0}^{\otimes \Theta(n\log(n))}$ in order to induce a useful single-qubit quantum state in the ancilla. This postselection succeeds with probability at least 1/2 -- or, more precisely,
\[P(\text{postselection success}) = 1-\frac{2m}{n!}+2\bigg(\frac{m}{n!}\bigg)^2\, ,\]
and from this point onwards we will safely assume that the postselection always succeeds. The resulting single-qubit state after postselection by $\ket{0}^{\otimes \Theta(n\log(n))}$ is
\begin{equation}\label{eqn:houtput}
\frac{n!}{\sqrt{(n!)^2-2(n!)m+2m^2}}\bigg(\frac{n!-m}{n!}\ket{0} + \frac{m}{n!}\ket{1}\bigg)\, .
\end{equation}
The magnitude of the inner product of \cref{eqn:houtput} with $\ket{0}$ is
\begin{equation}
\label{eqn:overlap} \frac{n!-m}{\sqrt{(n!)^2-2(n!)m+2m^2}} = 1 - \frac{1}{2}\bigg(\frac{m}{n!}\bigg)^2+\mathcal{O}\bigg(\frac{m}{n!}\bigg)^3\, .
\end{equation}

One can draw two conclusions from Eq.\cref{eqn:overlap}: firstly, that for each value of $m \in [0,n!]$, the single-qubit output state in Fig. \ref{fig:hadamardtest} is different. Secondly, for $m=0$ the output state is simply $\ket{0}$, but as $m$ proceeds towards $n!$ the output moves strictly further away from $\ket{0}$ towards $\ket{1}$ along the real arc of the Bloch sphere (the semicircle $x^2+z^2=1$, $x\geq 0$). When $m=n!$, the output state is $\ket{1}$. We will refer to the set of all $n!+1$ different output states the set of \textit{candidate states}, and we will term the specific state for which $m=i$ as the $i$th candidate state. Because the amplitudes in the output state are not complex, we can also deduce that they all lie along the same arc on the Bloch sphere. 

\begin{center}
\begin{figure}[H]
	\centering
    \includegraphics[scale=0.9]{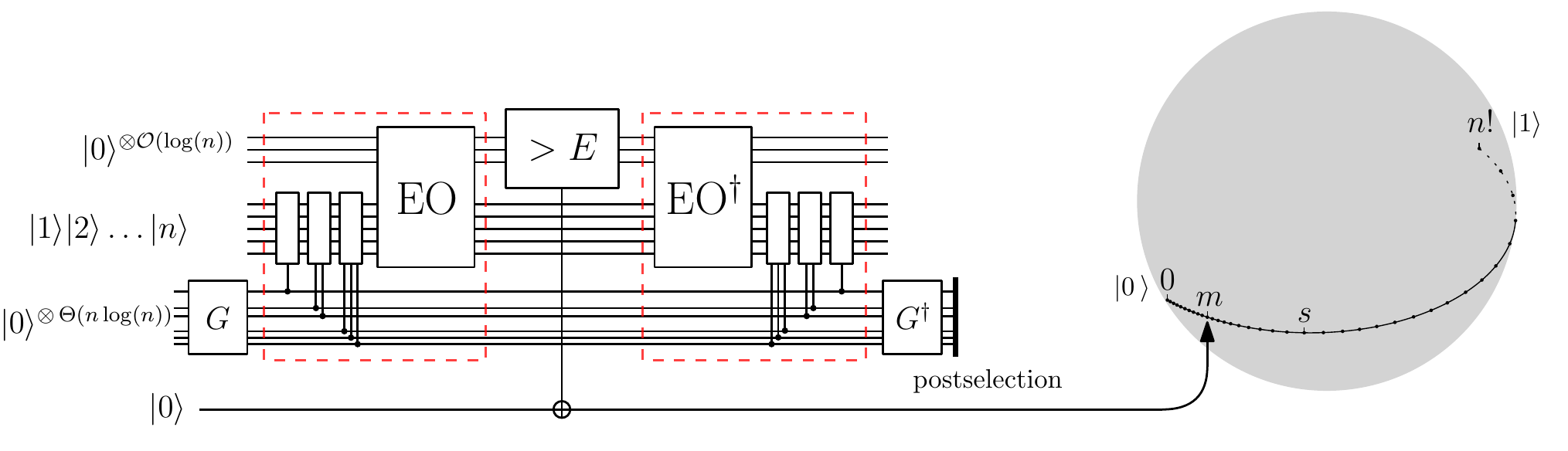}
    \caption{Our circuit produces one of $n!+1$ candidate states depending on $m$, the number of vertex permutations $\sigma$ with $\eo(\sigma) > E$. All candidate states lie along the same arc of the Bloch sphere. For all $s \in [0,n!]$, we know the exact coordinates of $s$th candidate state.}
    \label{fig:hadamardtest}
\end{figure}
\end{center}

The important task that we have achieved via this construction is encoding information about the edge overlap threshold $E$ into a single-qubit state. If $E$ is less than the maximum edge overlap, there will be a nonzero number of marked states (as seen in Fig. \ref{fig:threshold}) and thus the output of the circuit in Fig. \ref{fig:hadamardtest} will be one of the candidate states for which $m >0$.
If we could distinguish between the cases $m=0$ and $m>0$ efficiently, then we could construct an efficient algorithm not unlike Algorithm \ref{alg:durr} to find $\mathbf{MEO}(G_1,G_2)$. Measuring the output of the circuit in Fig. \ref{fig:hadamardtest} is only useful to distinguish between the cases $m=0$ and $m=n!$, and results in
\begin{equation}\label{eqn:distextrem}
	\begin{cases} \ket{0} & m = 0 \\ \ket{1} & m = n!\, . \end{cases}
\end{equation}

The remainder of this section details how the nonlinear quantum search algorithm described in \cite{childs2016optimal} can be used on the output of the circuit in Fig. \ref{fig:hadamardtest} to efficiently determine if $m=0$ or $m>0$. 

\subsection{Nonlinear Quantum Search to Distinguish Candidate States}\label{sec:disting}

Nonlinear quantum computing makes use of systems whose time evolution can be approximated by the nonlinear Schr\"odinger equation,
\begin{align}\label{eqn:nlse}
	\ii  \partial_t | \varphi \rangle = (H + K ) | \varphi \rangle \, ,
\end{align}
where \(H\) is a typical time-dependent Hermitian operator as seen in linear quantum computing. The nonlinearity of the system is introduced by the operator $K$. Nonlinear quantum search was first proposed by Abrams and Lloyd \cite{abrams1998nonlinear}.
More recently, Childs and Young \cite{childs2016optimal} developed a continuous-time search based upon the same properties of nonlinear quantum mechanics, using Eq.\cref{eqn:nlse} as their specific nonlinear Sch{\"o}dinger equation. In both \cite{abrams1998nonlinear} and \cite{childs2016optimal}, it is duly noted that the marvel of nonlinear quantum mechanics is in its capacity for efficient state discrimination between two exponentially (and even factorially) close states in polynomial time, which was shown not to be attainable by linear quantum computing in Section \ref{sec:grov}.

Following in the footsteps of Childs and Young \cite{childs2016optimal}, we will consider a nonlinear evolution satisfying
\begin{align}\label{eqn:gpn}
	\langle x | K | \varphi \rangle = g | \langle x | \varphi \rangle |^2 \langle x | \varphi \rangle \, ,
\end{align}
where \(g \in \mathbb{R}\) is the nonlinearity strength. Eq.\cref{eqn:gpn}, with its cubic nonlinearity, is known specifically as the Gross-Pitaevskii nonlinearity. Eq.\cref{eqn:nlse} is a statistical model of the behaviour exhibited by systems such as Bose-Einstein condensates -- other nonlinear Schr\"odinger equations exist and are apt descriptions of different systems, such as Bose liquids \cite{meyer2014quantum}.
 
In \cite{childs2016optimal}, Childs and Young detail a nonlinear quantum algorithm which could be used to efficiently distinguish between what we referred to in Section \ref{sec:candid} as the $0$th and $1$st candidate states. They also mention that their protocol can be extended to efficiently distinguish between the $0$th candidate state and other given candidate state (say the $s$th candidate state). We make this mathematically explicit, and it is this process that underpins our main result: Algorithm \ref{alg:gcvnqs}. 

Algorithm \ref{alg:gcvnqs} relies on two procedures, which are fully detailed in Section \ref{sec:subrs}.
\begin{itemize}
	\item Procedure B, which produces a qubit in the $m$th candidate state (where $m$ is the number of permutations having edge overlap greater than the current threshold overlap $E$).
	\item Procedure A, which determines if $m=0$ or $m>0$. It makes use of Procedure B.
\end{itemize}

\subsection{Procedures for final algorithm}\label{sec:subrs}

\subsubsection{Procedure B: sending the $0$th and $s$th candidate states to opposite poles of the Bloch sphere}\label{sec:meas}

\textit{Foreword:}
This section discloses how to use the nonlinear evolution procedure from \cite{childs2016optimal} in order to send the $0$th and $s$th candidate states to the Bloch sphere poles $\ket{0}$ and $\ket{1}$, respectively. The reader might note that the linear quantum circuit in Fig. \ref{fig:hadamardtest}, which is based on discussion in \cite{abrams1998nonlinear}, deviates slightly from the one given in \cite{childs2016optimal}. The choice of circuit affects the distribution of candidate states on the Bloch sphere. The main difference that has informed our choice is:
\begin{itemize}
	\item In \cite{childs2016optimal}, candidate states are distributed along a small curve on the Bloch sphere. This curve is not an arc. The $0$th candidate state resides at $\ket{0}$ and the $(n!)$th lies at the end of this small curve. Childs and Young's algorithm can be used to distinguish between any two known candidate states.
    \item In this paper and in \cite{abrams1998nonlinear}, candidate states are distributed along the semicircle $x^2+z^2=1$, $x \geq 0$ on the Bloch sphere. The $0$th candidate state resides at $\ket{0}$ and the $(n!)$th lies at $\ket{1}$.
\end{itemize}
Although using the nonlinear quantum procedure from \cite{childs2016optimal}, we have adopted the preceding linear circuit from \cite{abrams1998nonlinear} due to the simplicity of the candidate state curve it bestows, which de-clutters much of the ensuing complexity analysis. At this point, then, it is prudent to note that this section is phrased as the goal of distinguishing between the $0$th and $s$th candidate states on the candidate state arc from \cite{abrams1998nonlinear}, although it follows the methodology of \cite{childs2016optimal}.

\bigskip

\begin{algorithm}[H]
\caption*{\textbf{Procedure B:} candidate state generation and orientation}\label{alg:srb}
\begin{algorithmic}[0]
\INPUT{$n, 1\leq s \leq n!, 0<m<s, G_1, G_2$}
	\State generate a qubit in the $m$th candidate state via the circuit in Section \ref{sec:candid}.
    \State orient the arc of candidate states on the Bloch sphere;
    \State run nonlinear evolution for time $T(s/n!)$
    \State orient the curve of post-evolution candidate states so the $0$th and $s$th are at $\ket{0}$ and $\ket{1}$, respectively.
\OUTPUT{a single-qubit state ready for measurement in Procedure B}
\end{algorithmic}
\end{algorithm}

\textit{Complexity summary:} As discussed in Section \ref{sec:candid}, generating a qubit in the $m$th candidate state takes $\mathcal{O}(n^2 \log(n))$ fundamental quantum gates. Both ``orient'' steps are simple rotations on the Bloch sphere, and so take a negligible amount of $\mathcal{O}(1)$ fundamental quantum gates. The nonlinear evolution time, $T(s/n!)$, is at most $\mathcal{O}(\frac{1}{g} \log(n!)) = \mathcal{O}(\frac{1}{g} n \log(n))$ which demonstrates the crucial feature of nonlinear quantum mechanics: it can distinguish between factorially-finely spaced states in polynomial time. Procedure B uses $\mathcal{O}(n^2 \log(n))$ fundamental quantum gates and nonlinear evolution for time $T(s/n!)$.

\bigskip

\textit{Procedure B analysis:} Let $S$ be the sub-arc of the original candidate state arc, stretching from the $0$th candidate state to the $s$th. Using linear quantum mechanics, we can orient the arc to the position shown in Fig. \ref{fig:coords} (this initial placement of the arc is justified soon). We will use nonlinear quantum mechanics to stretch the endpoints of this arc to opposite ends of the Bloch sphere -- and then an idealised measurement of the qubit will result in $\ket{0}$ if $m=0$ and $\ket{1}$ if $m =s$.

\begin{figure}[H]
	\centering
    \includegraphics[]{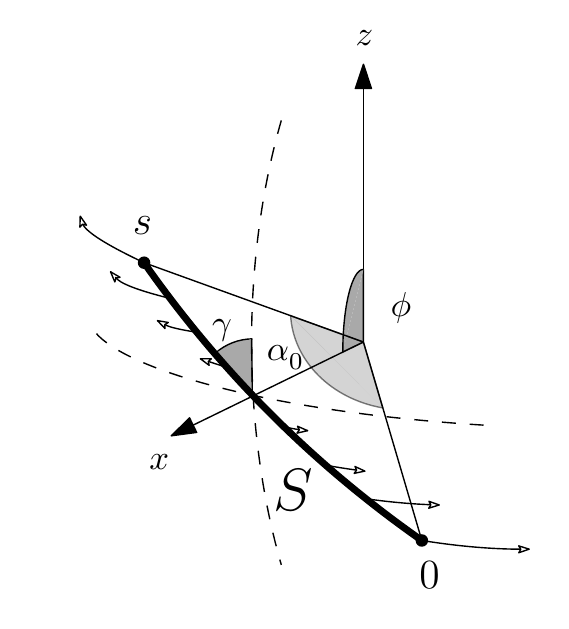}
    \caption{The position of $G$, the arc containing the candidate states on the surface of the Bloch sphere, and definition of angles $\phi$, $\alpha_0$ and $\gamma$. The endpoints separate the fastest under Gross-Pitaevskii evolution when $\phi=\pi/2$ and $\gamma = \pi/4$. The distorting effect of nonlinear time evolution is indicated by the white-tipped arrows. Note that the $0$th candidate state is not located at the pole $\ket{0}$ at this stage.}
    \label{fig:coords}
\end{figure}

Consider the nonlinear time evolution of a qubit initially lying on $S$ under the Gross-Pitaevskii equation, Eq.\cref{eqn:nlse}. The qubit's density matrix, $\rho$, can be specified by
\begin{align}
	\rho = \frac{1}{2} 
	\left( \begin{array}{cc}
	1+z & x- \ii  y \\ x+ \ii  y & 1-z
	\end{array} \right) \, ,
\end{align}
where \(x,y,z : [0,T] \to \mathbb{R}\) describe the qubit's position in 3D space on the Bloch sphere, and \(T \in \mathbb{R}\). To analyse the purely nonlinear evolution due to $K$ in Eq.\cref{eqn:nlse}, let $H =0$ for the time being. \cite{childs2016optimal} show that the qubit's evolution across the sphere simplifies to
\begin{align}\label{eqn:diffnlse}
	\text{d}_t (x,y,z) = gz (-y,x,0) \, .
\end{align}
The distorting effect of Eq.\cref{eqn:diffnlse} on $G$ is sketched in Fig. \ref{fig:coords}.


As in Fig. \ref{fig:coords}, let $\phi$ be the polar angle of the midpoint of $S$ and $\gamma$ be the anticlockwise rotation of $S$ from the line of latitude. Let $S(t)$ refer to the distorted curve after nonlinear evolution for time $t$. Finally, let the angle subtended by the endpoints of the curve at time $t \in [0,T]$ to be \(\alpha(t)\), and denote $\alpha (0)$ by $\alpha_0$. Note that, via Eq.\cref{eqn:houtput}, we have
\begin{equation}\label{eqn:a0}\cos\bigg(\frac{\alpha_0}{2}\bigg) = \frac{n!-s}{\sqrt{(n!)^2-2(n!)s+2s^2}} \, .\end{equation}
In Hilbert space (not the Bloch sphere), the inner product of the states at the endpoints of $S(t)$ is $\cos(\alpha(t)/2)$. Childs and Young show that this quantity decreases fastest when \(\phi=\pi/2\) and \(\gamma = 3\pi /4\), which is why this orientation appears in Fig. \ref{fig:coords}. As shown in \cite{childs2016optimal}, the inner product of the endpoints abides by
\begin{align} \label{eqn:angle}
	\cos \left(\frac{\alpha(t)}{2}\right) =  \frac{\cos\frac{\alpha_0}{2}  \cosh \frac{gt}{2}  -  \sinh\frac{gt}{2}}{\cosh \frac{gt}{2} - \cos \frac{\alpha_0}{2} \sinh\frac{gt}{2}}      \, .
\end{align}
The extremal states become orthogonal when Eq.\cref{eqn:angle} vanishes, i.e. we should set the evolution time to be
\begin{align}
T(s/n!) &= \frac{2}{g} \log \left( \cot \frac{\alpha_0}{4}  \right) \label{eqn:end_time} \\ & = \frac{2}{g} \bigg(\log\bigg(\frac{2n!}{s}\bigg) - \bigg(\frac{s}{n!}\bigg) +\mathcal{O}\bigg(\frac{s}{n!}\bigg)^2\bigg) \nonumber\\
	&\leq \mathcal{O}\bigg(\frac{1}{g} \log(n!) \bigg)\, . \nonumber
\end{align}
Nonlinear evolution elongates $S(t)$ laterally, thus immediately altering $\gamma$ from its ideal value of $\pi/4$. To counter this, we reintroduce the linear Hamiltonian term $H$ to Eqn.\cref{eqn:diffnlse}. Childs and Young \cite{childs2016optimal} show specifically that we can use the rotation
\begin{align}\label{eqn:lin_Hamiltonian}
	H(t) = \frac{g}{4} \cos (\frac{\alpha(t)}{2}) \sigma_x
\end{align}
to force the endpoints of \(S(t)\) to remain along the extension of the original arc $S$, and thus maintain $\gamma = \pi/4$. Here \(\sigma_x\) is the Pauli-\(x\) matrix. Note that the dilation of $S(t)$ is nonlinear, so non-endpoint candidate states will generally not lie on the extension of the original arc for any time $t>0$ (as seen in Fig. \ref{fig:lemma2}). After nonlinear evolution for time $T$, we can rotate the dilated curve $S(T(s/n!))$ via a simple unitary transformation so that the $0$th candidate state is at $\ket{0}$ and the $s$th is at $\ket{1}$, in much the same way we initially oriented $S$ in Fig. \ref{fig:coords}.

\bigskip

\subsubsection{Procedure: determining if $m= 0$ or $m > 0$}\label{sec:zoom}

\textit{Intuition:} Consider what would happen if we were to measure the qubit from Section \ref{sec:meas}. If $0<m<s$ then a single measurement will tell us nothing about the value $m$. A more meaningful measurement could be taken on an ensemble of $\omega$ of these qubits, resulting in the measurement outcome
\begin{equation}\label{eqn:cases2}\begin{cases} \ket{0}^{\otimes \omega} & \textrm{further investigation required} \\ \text{at least one } \ket{1} & m > 0\, .  \end{cases}\end{equation}
That is, measuring a single 1 implies $m>0$ (but the inverse proposition, regarding the measurement of all 0s, does not necessarily hold). In that event we should choose a smaller threshold energy, rebuild the circuit in Fig. \ref{fig:hadamardtest} accordingly, and repeat the process. 

\bigskip

\textit{Procedure:} The zooming procedure consists formally of the below steps, which determine if $m=0$ or $m>0$ for any given threshold edge overlap $E$ with some probability of failure.

\begin{algorithm}[H]
\caption*{\textbf{Procedure A:} determine if \underline{$m=0$} or \underline{$m>0$}}\label{alg:sra}
\begin{algorithmic}[0]
\INPUT{$n, 1 \leq s \leq n!, 0 \leq m < s, G_1,G_2,\omega$} 
        		\While{$s \geq 1$}
              	\State  use Procedure B to generate $\omega$ qubits in the $m$th candidate state after nonlinear evolution.
                \Procedure{B: candidate state generation and orientation } {$n, s, G_1, G_2$} 
                	\State  generate a qubit in the $m$th candidate state via the circuit in Section \ref{sec:candid}.
                    \State orient the arc of candidate states on the Bloch sphere.
                    \State run nonlinear evolution for time $T(s/n!)$.
                    \State orient the curve of candidate states so the $0$th and $s$th are at $\ket{0}$ and $\ket{1}$, respectively.				
                    \EndProcedure
                \State measure each of the $\omega$ qubits and record the results.
                \If{any 1s are measured}
                	\State \Return \underline{$m>0$} 
                \ElsIf{only 0s are measured}
                	\State $s \rightarrow \lfloor s/2 \rfloor$ \Comment{\textit{Only point at which the algorithm can fail.}}
                \EndIf
            \EndWhile
            \State \Return \underline{$m=0$}. \Comment{\textit{Only happens if $s=0$.}}
\OUTPUT{either \underline{$m=0$} or \underline{$m>0$}.}
\end{algorithmic}
\end{algorithm}

\textit{Complexity summary:}  At the end of this section we show that Procedure A uses at most $ \omega \, (\mathcal{O}(n^2\log(n))(1+\log_2(s))$ fundamental quantum gates and nonlinear evolution for time $\mathcal{O}(\frac{1}{g} \log(2s)(\log_2(n!/\sqrt{s})+1))$. In Theorem \ref{thm:prob}, we justify that taking as small an ensemble size as $\omega = \mathcal{O}(\log \log(n))$ is sufficient for a working algorithm. Further discussion in Section \ref{sec:SandO} confirms that the cost of repeatedly using Procedure A, both in terms of fundamental quantum gates and nonlinear evolution time, will never exceed polynomial time in $n$ when finding $\mathbf{MEO}(G_1,G_2)$.

\bigskip

\textit{Procedure A analysis:} This analysis is a proof of concept of the scheme outlined above.

After applying Procedure B to all $\omega$ qubits, we measure each one. Choose some constant scaling factor $k$, with the requirement that $k \in [1/s,1/2]$ (at the end of this analysis we pick $k=1/2$ for reasons outlined in the proof of Theorem \ref{thm:zoomok}). If each measurement results in $\ket{0}$, we assume $\mathbf{m\leq \lfloor ks \rfloor}$, set $s = \lfloor ks \rfloor$ and repeat the procedure. Otherwise, if at least one measurement resulted in $\ket{1}$, we know with certainty that $\mathbf{m>0}$ and thus terminate the procedure.

At this point, there are two cases: $m \leq \lfloor ks \rfloor$ or $m > \lfloor ks \rfloor $. The measurement outcomes and appropriate assumptions to make are summarised in Measurements \ref{meas:meas1} and \ref{meas:meas2}, respectively.

\begin{meas}\label{meas:meas1} (Case $m \leq \lfloor  ks \rfloor$) Let $k \in [1/s,1/2] $ and define $\theta(T(s/n!))/2$ as in Fig. \ref{fig:overlap}. Measurement of the $\omega$ qubits results in
\begin{align}
	\begin{cases}
    	00 \dots 0, & \text{Continuation: report }\mathbf{m\leq \lfloor ks \rfloor }, \text{ probability at least } (\cos^2(\theta(T(s/n!))/2))^\omega\\
        \textrm{at least one } 1, & \text{Termination: report } \mathbf{m>0}, \text{ probability at most } 1- (\cos^2(\theta(T(s/n!))/2))^\omega\, .
    \end{cases}
\end{align}
\end{meas}

\begin{meas}\label{meas:meas2} (Case $m > \lfloor ks \rfloor$) Let $k \in [1/s,1/2] $ and define $\theta(T(s/n!))/2$ as in Fig. \ref{fig:overlap}. Measurement of the $\omega$ qubits results in
	\begin{align}
    \begin{cases}
    	00 \dots 0, & \text{Failure: report }\mathbf{m\leq \lfloor ks \rfloor}, \text{ probability at least } (\cos^2(\theta(T(s/n!))/2))^\omega\\
        \textrm{at least one } 1, & \text{Termination: report }\mathbf{m>0}, \text{ probability at most } 1- (\cos^2(\theta(T(s/n!))/2))^\omega\, .
    \end{cases}
    \end{align}
\end{meas}

The only time the procedure fails is if $m>\lfloor ks \rfloor$ and we record all zeroes, thus assuming $\mathbf{m \leq \lfloor ks \rfloor}$ via Measurement \ref{meas:meas2}.  For any value of $m>0$, a procedure that never fails will encounter Measurement \ref{meas:meas2} only once. In Theorem \ref{thm:prob}, we justify that simply by setting $\omega = \mathcal{O}(\log \log (n))$ we can make the probability of procedure failure arbitrarily low. Accordingly, we concern ourselves only with the legitimacy of the assumptions made in Measurement \ref{meas:meas1}.

\begin{figure}[H]
	\centering
    \includegraphics[width=0.7\linewidth]{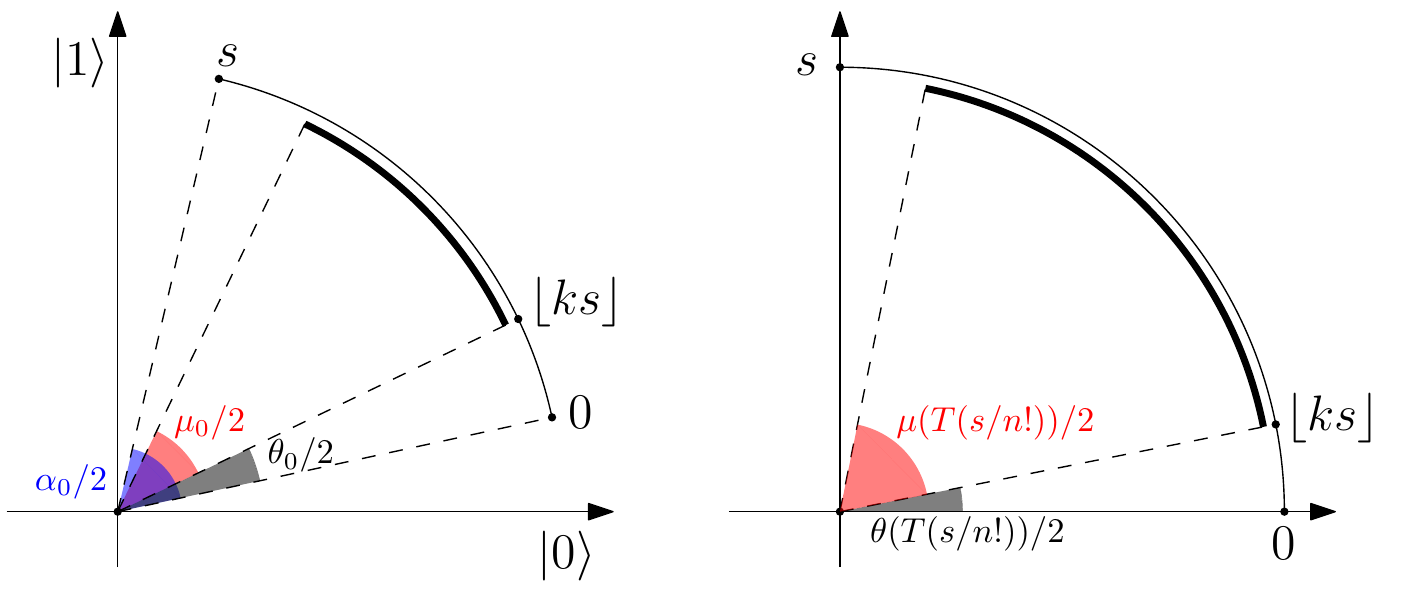}
    \caption{The validity of the zooming technique hinges on the correct choice of $k \in [1/s,1/2]$. This diagram describes the position of the $0$th, $\lfloor ks \rfloor$th and $s$th candidate states prior to nonlinear evolution at time $t=0$ (left) and afterwards, at time $T(s/n!)$ (right). Note: the $s$th and $0$th candidate states do not evolve directly to the states $\ket{1}$ and $\ket{0}$ as a result of nonlinear evolution, but are oriented onto these states afterwards by a simple rotation. For clarity, this diagram skips this step and simply displays $\ket{1}$ and $\ket{0}$ as the end result of nonlinear evolution.}
    \label{fig:overlap}
\end{figure}

If $m > \lfloor ks \rfloor$, we assume that Measurement \ref{meas:meas2} will result in a successful termination outcome and report $\mathbf{m>0}$; similarly, if $m \leq \lfloor ks \rfloor$ and any qubit is measured in the state $\ket{1}$, we will also terminate the procedure and report that $\mathbf{m>0}$. The technical part of this step is in justifying the continuation outcome for $m \leq \lfloor ks \rfloor$ in Measurement \ref{meas:meas1}. When we zoom by setting $s = \lfloor k s \rfloor$, recall that the candidate states are not evenly spaced. We allay any of the resulting concerns one might have about the zooming process via Theorem \ref{thm:zoomok}.

\begin{defn}\label{defn:suit} Let $n\geq 1$, $0\leq k \leq 1$ and $1 \leq s \leq n!$. The $\lfloor ks \rfloor$th candidate state is deemed ``suitable for zooming" if
\[\frac{1}{\sqrt{2}} \leq \cos\bigg(\frac{\theta(T(s/n!))}{2}\bigg) < 1 - \epsilon ,\]
where $\theta(T(s/n!))/2)$ is defined as in Figure \ref{fig:overlap} and $\epsilon = \mathcal{O}(1)$.
\end{defn}

\begin{thm} \label{thm:zoomok} For any $n$ and any $1 \leq s \leq n!$, we can efficiently find a value $k$ such that the $\lfloor ks \rfloor$the candidate state is suitable for zooming. Specifically,
\begin{enumerate}[label=(\alph*)]
\item For $1 \leq n \leq 5$, we can determine candidate states suitable for zooming on a case-by-case basis.
\item For $n \geq 5$ and $2 \leq s \leq n!$, the $\lfloor s/2 \rfloor $th candidate state is always suitable for zooming.
\end{enumerate}
\end{thm}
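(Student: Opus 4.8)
The plan is to reduce the suitability condition (Definition~\ref{defn:suit}) to a statement about a single scalar, $\cos(\theta(T(s/n!))/2)$, the Hilbert-space overlap of the evolved $\lfloor ks\rfloor$th candidate state with the evolved $0$th candidate state (equivalently, with $\ket{0}$ after the final reorientation in Procedure B). Working in the Bloch picture with the optimal orientation of \cref{fig:coords}, I would place the $0$th and $s$th states symmetrically about the equator as $P_\pm=\big(\cos\tfrac{\alpha_0}{2},\pm\tfrac{1}{\sqrt2}\sin\tfrac{\alpha_0}{2},\pm\tfrac{1}{\sqrt2}\sin\tfrac{\alpha_0}{2}\big)$, with $\alpha_0$ fixed by \cref{eqn:a0}, and note that under the reoriented Gross--Pitaevskii flow these endpoints follow \cref{eqn:angle} and reach the great circle $x=0$ exactly at $t=T(s/n!)$ (\cref{eqn:end_time}). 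The $\lfloor ks\rfloor$th state sits on the arc between $P_+$ and the arc midpoint $M=(1,0,0)$, at Bloch angle $f(\lfloor ks\rfloor)$ from $P_+$, where $f(m):=2\arctan\!\big(m/(n!-m)\big)$ is the Bloch angle of the $m$th candidate state from the $0$th (read off from \cref{eqn:a0}).

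For the lower bound $\cos(\theta/2)\ge 1/\sqrt2$ I would exploit a fixed point of the flow. Substituting $\vec r=(1,0,0)$ into \cref{eqn:diffnlse} together with the reorienting $x$-rotation of \cref{eqn:lin_Hamiltonian} gives $\dot{\vec r}=0$, so the midpoint $M$ is stationary for all $t$. Since the endpoints terminate on the circle $x=0$, which is Bloch-orthogonal to $M$, the evolved $M$ has overlap exactly $\cos(\pi/4)=1/\sqrt2$ with $\ket 0$. Because the flow is a homeomorphism of the sphere at each instant it preserves the order of the candidate states along the arc; hence any state lying between $P_+$ and $M$ lands inside the $\ket0$-hemisphere and has overlap $\ge 1/\sqrt2$. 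It therefore suffices to show $f(\lfloor ks\rfloor)\le \alpha_0/2=f(s)/2$, i.e. $f(\lfloor s/2\rfloor)\le f(s)/2$ for $k=1/2$. A direct computation gives $f''(m)=4\,n!\,(n!-2m)/\big((n!)^2-2\,n!\,m+2m^2\big)^2$, so $f$ is convex on $[0,n!/2]$; for $s\le n!/2$ convexity with $f(0)=0$ yields $f(s/2)\le f(s)/2$ immediately, and monotonicity of $f$ absorbs the floor. For $n!/2< s\le n!$ I would finish by analysing $g(s):=\tfrac12 f(s)-f(s/2)$, checking $g(0)=g(n!)=0$ and that $g'$ changes sign once (from $+$ to $-$), so $g\ge 0$ throughout.

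The upper bound $\cos(\theta/2)<1-\epsilon$, which keeps the $\lfloor s/2\rfloor$th state a constant Bloch angle away from $\ket0$, is the main obstacle, because it genuinely requires the trajectory of an \emph{interior} point rather than an endpoint. The reorienting Hamiltonian forces only the endpoints back onto the geodesic, so an interior point leaves it; moreover, writing the flow in a stereographic coordinate $\zeta=(y+\ii z)/(1+x)$ shows the velocity depends on $\bar\zeta$, so the flow is not Möbius and there is no shortcut via conformal maps. The plan is to integrate \cref{eqn:diffnlse} plus \cref{eqn:lin_Hamiltonian} for the interior point, using as input the closed-form endpoint trajectory $P_+(t)=\big(\tanh\tau,\operatorname{sech}\tau/\sqrt2,\operatorname{sech}\tau/\sqrt2\big)$ with $\tau=\tfrac{g}{2}(T-t)$ and the symmetry $(x,y,z)\mapsto(x,-y,-z)$ that commutes with the flow and exchanges $P_+\leftrightarrow P_-$. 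Having obtained $\cos(\theta(T(s/n!))/2)$ as an explicit function of $s/n!$, I would maximise it over $2\le s\le n!$ and verify the maximum is bounded below $1$ by an $n$-independent constant, which delivers a valid $\epsilon=\mathcal O(1)$ and establishes part (b) for all $n\ge 5$.

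Finally, for part (a) the admissible range $1\le n\le 5$ gives $n!\le 120$, so there are only finitely many pairs $(s,\lfloor ks\rfloor)$; I would simply evaluate $\cos(\theta(T(s/n!))/2)$ for each and select, case by case, a $k\in[1/s,1/2]$ placing it in $[1/\sqrt2,1-\epsilon)$. The hardest part of the whole argument is the interior-point integration underlying the upper bound; the lower bound, by contrast, collapses to the clean fixed-point observation at $M$ together with the convexity of $f$.
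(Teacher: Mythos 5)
Your lower bound is correct, and it takes a genuinely different (and arguably cleaner) route than the paper: the observation that the arc midpoint $M=(1,0,0)$ is a common fixed point of the Gross--Pitaevskii term in Eq.\cref{eqn:diffnlse} (since $z=0$ there) and of the reorienting $\sigma_x$ rotation in Eq.\cref{eqn:lin_Hamiltonian}, together with the convexity analysis of $f(m)=2\arctan(m/(n!-m))$, reproduces exactly the content of the paper's \cref{lem:halfway}, which instead computes algebraically the index $s'$ of the candidate sitting at the arc midpoint and shows $s'\geq s/2$. (One caveat: your step ``order preservation along the arc implies the state lands inside the $\ket{0}$-hemisphere'' needs more justification, since the evolved sub-curve joining $\ket{0}$ to $M$ is connected but not obviously confined to the cap; the plane $y+z=0$ is not invariant under the combined flow.)

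The genuine gap is the upper bound, which you yourself flag as the main obstacle and then leave as a plan rather than a proof. Your proposal is to integrate Eq.\cref{eqn:diffnlse} plus Eq.\cref{eqn:lin_Hamiltonian} exactly for an \emph{interior} point of the arc and then maximise the resulting closed form over $s$. No such closed form is exhibited, and none is likely to exist: once the time-dependent $\sigma_x$ correction is added, $z$ is no longer conserved and the non-autonomous quadratic system for an interior point has no evident integral beyond normalisation; the closed-form solutions you correctly wrote down exist only for the endpoints and the midpoint, where symmetry collapses the system. This is precisely the computation the paper avoids. Its substitute is a \emph{comparison} argument: take the symmetric sub-arc through the $(ks)$th state with initial endpoint angle $\mu_0$ given by Eq.\cref{eqn:mu0}, hypothetically run the Childs--Young protocol on that sub-arc for the \emph{same} duration $T(s/n!)$, so that the two-endpoint formula Eq.\cref{eqn:angle} yields the closed form Eq.\cref{eqn:muts}, and then argue geometrically (Fig. \ref{fig:lemma2}) that the true interior point lags behind this hypothetical one, so that $\cos(\theta(T(s/n!))/2)\leq\cos(\pi/4-\mu(T(s/n!))/4)$ is a valid upper bound. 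Explicit minimisation of $\cos(\mu(T(s/n!))/2)$ (value $2\sqrt{2}/3$ at $s=n!/2$ for $k=1/2$) then gives $\cos(\theta(T(s/n!))/2)\leq\sqrt{2/3}$ when $s$ is even; when $s$ is odd the floor in $\lfloor s/2\rfloor$ pushes the bound slightly above $\sqrt{2/3}$, and the paper needs a separate analysis of $s=1,3,5$ and a numerically certified bound $\sqrt{2/3}+\epsilon_5$ (with $\epsilon_5\approx 0.009$) for odd $s\geq 7$. Your plan does not address this parity subtlety at all, and without either the (infeasible) exact interior trajectory or a comparison lemma of the paper's type, part (b) of the theorem is not established.
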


The proof of Theorem \ref{thm:zoomok} can be found in Appendix \ref{thm:zoomok}. As a consequence of this proof, we also have that $2\sqrt{2}/3$ is an upper bound for $\cos(\theta(T(s/n!))/2)$ for $n \geq 5$ and any $2 \leq s \leq n!$. As we have assumed that no incorrect measurements are made when Measurement \ref{meas:meas2} inevitably arises, we either successfully report that $\mathbf{m>0}$ after one of our many zooms, or let the algorithm run to completion indicating that $\mathbf{m=0}$. In the worst-case scenario, when $m$ is very small (i.e. $m$ is either $0$ or $1$), we will eventually set $s=1$. Our next step, distinguishing between the $0$th and $1$st candidate states, is what the algorithm in \cite{childs2016optimal} was built for, and we know that our measurement during this step will determine if $m=0$ or $m=1$ with certainty: the $\omega$ qubits in our measurement ensemble will all be in the state $\ket{1}$ if $m=1$, or $\ket{0}$ if $m=0$. This completes the analysis of the procedure.

Procedure B takes $\omega \, \mathcal{O}(n^2\log(n))$ fundamental quantum gates and exploits nonlinear evolution for time $\omega \, T(s/n!)$. The cost of Procedure A is then
\begin{align*}
	\omega \, \sum_{i=0}^{\lfloor \log_2s \rfloor} \bigg( \mathcal{O}(n^2 \log(n)) \quad &+\quad T(\lfloor s/2^i \rfloor)\bigg)\\
    \leq \quad \omega \, \bigg( \mathcal{O}(n^2\log(n))(1+\log_2(s)) \quad &+ \quad \sum_{i=0}^{\lfloor \log_2s \rfloor} T(\lfloor s/2^i \rfloor)\bigg)\, .
\end{align*}

\begin{lem}\label{lem:timecost}
For any $n\geq 2$ and $1\leq s \leq n!$, we have
	\[  \sum_{i=0}^{\lfloor \log_2s \rfloor} T(\lfloor s/2^i \rfloor) \leq \frac{2}{g} \log(2s) (\log_2(n!/\sqrt{s})+1) \, .\]
\end{lem}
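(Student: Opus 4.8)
The plan is to reduce the lemma to a single per-term bound on the evolution time and then sum a geometric-in-$i$ family of logarithms, the whole chain being essentially tight in the power-of-two case. First I would establish the clean per-term estimate
\[ T(s') \;\le\; \frac{2}{g}\log\!\Big(\frac{2\,n!}{s'}\Big),\qquad 1\le s'\le n!. \]
This follows from Eq.~\cref{eqn:end_time} together with Eq.~\cref{eqn:a0}: writing $r=s'/n!$, one has $\cos(\alpha_0/2)=(1-r)/\sqrt{1-2r+2r^2}$ and hence $\sin(\alpha_0/2)=r/\sqrt{1-2r+2r^2}$, so the half-angle identity $\cot(\alpha_0/4)=(1+\cos(\alpha_0/2))/\sin(\alpha_0/2)$ gives the exact form $\cot(\alpha_0/4)=(\sqrt{1-2r+2r^2}+1-r)/r$. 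Since $1-2r+2r^2\le 1$ and $1-r\le 1$ on $[0,1]$, the numerator is at most $2$, so $\cot(\alpha_0/4)\le 2/r=2n!/s'$, and taking logarithms yields the displayed bound.

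With this in hand, set $L=\lfloor\log_2 s\rfloor$ so the sum runs over $i=0,\dots,L$; every argument $\lfloor s/2^i\rfloor$ then lies in $[1,n!]$, so the per-term bound applies, giving
\[ \sum_{i=0}^{L} T(\lfloor s/2^i\rfloor)\;\le\;\frac{2}{g}\Big[(L+1)\log(2n!)-\sum_{i=0}^{L}\log\lfloor s/2^i\rfloor\Big]. \]
The key elementary observation is that $s\ge 2^L$ forces $\lfloor s/2^i\rfloor\ge 2^{L-i}$ for each $0\le i\le L$ (the right-hand side is a positive integer not exceeding $s/2^i$), whence $\sum_{i=0}^{L}\log\lfloor s/2^i\rfloor\ge(\log 2)\,L(L+1)/2$. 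Substituting this and using $\log(2n!)=(\log 2)(1+\log_2 n!)$ collapses the bracket, yielding
\[ \sum_{i=0}^{L} T(\lfloor s/2^i\rfloor)\;\le\;\frac{2\log 2}{g}\,(L+1)\Big(\log_2 n!-\tfrac{L}{2}+1\Big). \]

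The final step is to show this does not exceed the target, which after writing $\log(2s)=(\log 2)(1+\log_2 s)$ reads $\frac{2\log 2}{g}(1+\log_2 s)(\log_2 n!-\tfrac12\log_2 s+1)$. This is the only part requiring genuine care: replacing $L$ by $a:=\log_2 s\ge L$ enlarges the first factor $(1+x)$ but shrinks the second factor $(\log_2 n!-x/2+1)$, so a factor-by-factor comparison fails. I would instead treat $f(x)=(1+x)(\log_2 n!-x/2+1)$ as a function of real $x$ and compute $f'(x)=\log_2 n!+\tfrac12-x$; since $x$ ranges over $[L,a]$ with $a=\log_2 s\le\log_2 n!$ (because $s\le n!$), we have $f'\ge 0$ throughout, so $f$ is increasing and $f(L)\le f(a)$, which is exactly the claimed inequality. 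I expect this monotonicity argument — and the recognition that the hypothesis $s\le n!$ is precisely what guarantees $f'\ge 0$ — to be the main obstacle; everything else is the per-term estimate and a clean power-of-two comparison, the latter being tight exactly when $s$ is a power of $2$.
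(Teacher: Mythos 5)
Your proof is correct, and it takes a genuinely different route from the paper's. The paper argues analytically: it Taylor-expands $T(s/n!)$ about $s/n!=0$ to fifth order, sums the truncated series over $i$, and then discards the ``infinitesimal polynomial terms,'' so its final step is an approximation rather than a strict inequality (and its intermediate sum even uses a summand independent of $i$). Your argument replaces all of this with three exact elementary steps: (i) the closed-form bound $\cot(\alpha_0/4)=\bigl(\sqrt{1-2r+2r^2}+1-r\bigr)/r\le 2/r$ from the half-angle identity, giving the per-term estimate $T(s')\le\frac{2}{g}\log(2n!/s')$ with no series truncation; (ii) the integer bound $\lfloor s/2^i\rfloor\ge 2^{L-i}$, which turns the sum of logarithms into an exact arithmetic series; and (iii) the monotonicity of $f(x)=(1+x)\bigl(\log_2 n!-x/2+1\bigr)$ on $[L,\log_2 s]$, since $f'(x)=\log_2 n!+\tfrac12-x>0$ there. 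What your approach buys is rigor: every step is a genuine inequality, steps (ii) and (iii) are tight when $s$ is a power of two, and the role of the hypothesis $s\le n!$ is made explicit (it is exactly what keeps $f$ increasing), whereas in the paper it enters only implicitly through smallness assumptions on $s/n!$. What the paper's expansion buys is sharper tracking of the subleading corrections $-s/n!-\cdots$ for small $s/n!$, which your cruder bound forgoes; but since the lemma asserts only the stated upper bound, that extra precision is unnecessary, and your proof could be substituted for the paper's and would strengthen it.
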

The proof of Lemma \ref{lem:timecost} is in Appendix \ref{proof:timecost}. Therefore, this entire procedure takes time
\begin{align}\label{eqn:distinguish_time}
	\omega \, \bigg(\mathcal{O}(n^2\log(n))(1+\log_2(s)) +\frac{2}{g} \log(2s) (\log_2(n!/\sqrt{s})+1)\bigg)\, .
\end{align}

\section{Full algorithm for graph comparison} \label{sec:SandO}

At long last, we are armed with the ability to distinguish between the cases \(m=0\) and \(m > 0\). We can find $\mathbf{MEO}(G_1,G_2)$ via the following algorithm:

\begin{algorithm}[H]
\caption{Graph comparison via nonlinear quantum search}\label{alg:gcvnqs}
\begin{algorithmic}[0]
\INPUT{Graphs $G_1, G_2$ and $n\geq 5$. $\omega = \mathcal{O}(\log \log (n))$.}
\State $E\rightarrow E_{\max}$
\State $s\rightarrow n!$.
	\For{$i=1,2,\dots,\log_2{E_{\max}}$}
 		\State let $m$ be the number of permutations $\sigma \in S_n$ having edge overlap greater than $E$.
       \Procedure{ A: determine if \underline{$m=0$} or \underline{$m>0$} \,}  {$n,s, G_1,G_2,\omega$} \hspace{9.6cm} \makebox[0pt][r]{\tikzmark{start1}\phantom{\algorithmicprocedure}}
       		\While{$s \geq 1$}
              	\State  use Procedure B to generate $\omega$ qubits in the $m$th candidate state after nonlinear evolution.
                \Procedure{B: candidate state generation and orientation } {$n, s, G_1, G_2$} \hspace{2.8cm} \makebox[0pt][r]{\tikzmark{start2}\phantom{\algorithmicprocedure}}
                	\State  generate a qubit in the $m$th candidate state via the circuit in Section \ref{sec:candid}.
                    \State orient the arc of candidate states on the Bloch sphere.
                    \State run nonlinear evolution for time $T(s/n!)$.
                    \State orient the curve of candidate states so the $0$th and $s$th are at $\ket{0}$ and $\ket{1}$, respectively.				
                    \EndProcedure \makebox[0pt][r]{\tikzmark{end2}\phantom{\algorithmicend\ \algorithmicprocedure}}
                    \drawCodeBox{start2}{end2}
                \State measure each of the $\omega$ qubits and record the results.
                \If{any 1s are measured}
                	\State \Return \underline{$m>0$} 
                \ElsIf{only 0s are measured}
                	\State $s \rightarrow \lfloor s/2 \rfloor$ \Comment{\textit{Only point at which the algorithm can fail.}}
                \EndIf
            \EndWhile
            \State \Return \underline{$m=0$}. \Comment{\textit{Only happens if $s=0$.}}
            \EndProcedure \makebox[0pt][r]{\tikzmark{end1}\phantom{\algorithmicend\ \algorithmicprocedure}}
            \drawCodeBox{start1}{end1}
        \If {\underline{$m>0$}}
        	\State $E \rightarrow E + E_{\max}/2^i$
        \ElsIf {\underline{$m=0$}}
        	\State $E \rightarrow E - E_{\max}/2^i$
            \State $s \rightarrow n!$
        \EndIf
    \EndFor
\State \Return $E$
\OUTPUT{$E = \mathbf{MEO}(G_1,G_2)$ with probability greater than 1/2.}
\end{algorithmic}
\end{algorithm}

The following theorem proves that we can set $\omega = \mathcal{O}(\log \log(n))$ to ensure that Algorithm \ref{alg:gcvnqs} succeeds with constant probability.

\begin{thm}\label{thm:prob}
For $n \geq 5$, setting $\omega = \mathcal{O}(\log \log (n))$ is sufficient for Algorithm \ref{alg:gcvnqs} to succeed with probability greater than 1/2.
\end{thm}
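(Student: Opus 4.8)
The plan is to reduce the success of Algorithm \ref{alg:gcvnqs} to the success of its individual calls to Procedure A, bound the failure probability of a single call, and then control the total via a union bound over the outer loop. First I would observe that the outer \textbf{for} loop is nothing but a binary search on the threshold $E$: provided every call to Procedure A correctly reports whether $m=0$ or $m>0$, the update rule $E \to E \pm E_{\max}/2^i$ halves the search interval each iteration and converges to $\mathbf{MEO}(G_1,G_2)$ after its $\log_2 E_{\max}$ iterations. Hence Algorithm \ref{alg:gcvnqs} returns the correct value whenever all of its Procedure A calls are correct, and it suffices to bound the probability that some call errs.

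Next I would bound the failure probability of one call to Procedure A. As the analysis preceding Measurements \ref{meas:meas1} and \ref{meas:meas2} records, the only event that can corrupt a call is measuring all zeros during a Measurement \ref{meas:meas2} step, i.e. when $m > \lfloor ks \rfloor$; the early-termination branch and the $m=0$ branch never produce an incorrect report. Because $m$ is fixed while $s$ is at least halved each iteration, a run that does not fail enters this dangerous regime exactly once, so a failing run must in particular record all zeros at the first Measurement \ref{meas:meas2} step it reaches. The all-zeros probability there is at most $(\cos^2(\theta(T(s/n!))/2))^\omega$, and by the bound recorded after Theorem \ref{thm:zoomok} (taking $k=1/2$ and $n \geq 5$) we have $\cos(\theta(T(s/n!))/2) \leq 2\sqrt{2}/3$, whence $\cos^2(\theta(T(s/n!))/2) \leq 8/9$. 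Therefore
\[ P(\text{Procedure A errs}) \leq \left(\tfrac{8}{9}\right)^{\omega}. \]

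Finally I would assemble the global bound. Since $E_{\max} \leq n(n-1)/2$, the outer loop makes at most $\log_2 E_{\max} = \mathcal{O}(\log n)$ calls to Procedure A, so a union bound gives
\[ P(\text{Algorithm \ref{alg:gcvnqs} fails}) \leq \log_2(E_{\max}) \left(\tfrac{8}{9}\right)^{\omega} = \mathcal{O}(\log n)\left(\tfrac{8}{9}\right)^{\omega}. \]
Requiring the right-hand side to be below $1/2$ and solving for $\omega$ yields $\omega > \log_{9/8}\!\big(2\log_2 E_{\max}\big) = \Theta\!\big(\log\log n\big)$, so $\omega = \mathcal{O}(\log\log n)$ is indeed sufficient for a success probability exceeding $1/2$.

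The step I expect to be the main obstacle is the second one: cleanly justifying the single-call bound $(8/9)^\omega$. This demands care in two places — first, confirming that the all-zeros outcome in the $m > \lfloor ks \rfloor$ regime is genuinely the \emph{only} failure mode, and second, verifying that the per-qubit quantity really \emph{upper}-bounds the all-zeros probability when $m > \lfloor ks \rfloor$. The latter follows from the monotone ordering of candidate states along the arc: the $m$th state lies strictly beyond the $\lfloor ks \rfloor$th, hence closer to $\ket{1}$, so its chance of returning $\ket{0}$ is no larger than that of the $\lfloor ks \rfloor$th state. Once this single-call estimate is in hand, the union bound and the inversion for $\omega$ are routine.
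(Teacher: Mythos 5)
Your proposal is correct and follows essentially the same route as the paper: failure can only occur by recording all zeros at the single Measurement \ref{meas:meas2} instance per call to Procedure A, the per-call failure probability is bounded via Theorem \ref{thm:zoomok} by $(8/9)^{\omega}=\left(2\sqrt{2}/3\right)^{2\omega}$, and the $\log_2 E_{\max}$ calls are combined to give the condition $\omega = \mathcal{O}(\log\log n)$. Your union bound is just the rigorous (Bernoulli) form of the paper's first-order expansion of $\bigl(1-\left(2\sqrt{2}/3\right)^{2\omega}\bigr)^{\log_2 E_{\max}}$, and your monotonicity remark correctly supplies the justification (which the paper leaves implicit, and indeed mislabels in Measurement \ref{meas:meas2}, where ``at least'' and ``at most'' should be swapped) that the all-zeros probability in the $m>\lfloor ks\rfloor$ case is \emph{upper}-bounded by $(\cos^{2}(\theta(T(s/n!))/2))^{\omega}$.
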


\begin{proof}
Recall that the algorithm fails only if we assume $\mathbf{m \leq  \lfloor ks \rfloor}$ as a result of a Measurement \ref{meas:meas2}. Measurement \ref{meas:meas2} will only occur if $m>0$, and in this case successful application of Procedure A will only produce a instance of Measurement \ref{meas:meas2} once. Via Theorem \ref{thm:zoomok}, the probability of such a measurement resulting in termination and thus succeeding is at least $1-(2\sqrt{2}/3)^{2\omega}$. The entirety of Algorithm \ref{alg:gcvnqs} will execute Procedure A at most $\log_2(E_{\max})$ times, and so we have
\begin{align*}
    P(\textrm{Algorithm \ref{alg:gcvnqs} does not fail})	&\geq \bigg(1 - \left(2\sqrt{2}/3\right)^{2\omega} \bigg)^{\log_2(E_{\max})}\\
    	&= 1 - \log_2(E_{\max}) \left(2\sqrt{2}/3\right)^{2\omega} + \mathcal{O}\left(\left(2\sqrt{2}/3\right)^{4\omega}\right)\, .
\intertext{The condition requiring the leading term to be less than 1/2 is}\frac{1}{2}&>\log_2(E_{\max})\big(2\sqrt{2}/3)^{2\omega} \\
\omega &> \frac{1}{2} \frac{\log(2 \log_2(E_{\max}))}{\log\left(1/\left(2\sqrt{2}/3\right)\right)}\\ &= \mathcal{O}(\log \log(n))\, ,
\end{align*}
as $E_{\max} \leq n(n-1)/2$. In Appendix \ref{proof:ensemble}, we show that $\omega \approx 10 \log\log(n)$ is sufficient for $n \geq 5$.

\end{proof}

\underline{\textit{Algorithm \ref{alg:gcvnqs} complexity analysis:}}

Break the complexity up into two terms: the linear evolution time and the nonlinear evolution time contributed by the procedure from Section \ref{sec:meas}. In full, the highest cost that Procedure A can have is in zooming from $s=n!$ to $s=1$, which uses $\omega \, \mathcal{O}(n^2\log(n))(1+\log_2(n!))$ fundamental quantum gates and nonlinear evolution for time $\omega \, \frac{2}{g} \log(2n!) (\log_2(\sqrt{n!})+1)$.
As discussed in Theorem \ref{thm:prob}, Procedure A will occur at most $\log_2(E_{\max})$ times. Therefore an upper bound for the complexity of Algorithm \ref{alg:gcvnqs} is
\begin{align*}
    &\log_2(E_{\max}) \, \omega \, \bigg(\mathcal{O}(n^2\log(n))(1+\log_2(n!)) +\frac{2}{g} \log(2n!) (\log_2(\sqrt{n!})+1)\bigg)\\
    = \quad  & \mathcal{O}(n^3 \log^3(n) \log\log(n)) \quad + \quad \mathcal{O}\left(\frac{1}{g} n^2 \log^3(n) \log\log(n)\right)\, ,
\end{align*}
using $\omega = \mathcal{O}(\log\log(n))$ and $\log_2(E_{\max}) = \mathcal{O}(\log(n))$. The first term, $\mathcal{O}(n^3\log^3(n) \log\log(n))$, refers to the total number of fundamental quantum gates required to implement all of the conventional, linear circuits required for this algorithm. The second term, $\mathcal{O}(\frac{1}{g} n^2 \log^3(n))$, specifies the total duration of single-qubit evolution under the nonlinear Gross-Pitaevskii equation.

\textit{Note:} The worst-case scenario for Algorithm \ref{alg:gcvnqs} is if the maximum edge overlap is $\mathbf{MEO}(G_1,G_2)=E_{\max}/2$, where $E_{\max}=\min\{|G_1|,|G_2|\}$, and if there is only one permutation with this edge overlap, with all other permutations having edge overlap less than $E_{\max}/2$. This is the situation in which Algorithm \ref{alg:gcvnqs} will need to execute Procedure A $\log_2(E_{\max})$ times.

\section{Conclusion}

In this paper, we present an efficiently scaling quantum algorithm that finds the maximum edge overlap between a pair of general graphs, achieving exponential speedup over existing classical methods. The algorithm makes use of a two-part quantum dynamic process: in the first part we encode information crucial for the comparison of the two graphs in a single qubit. This information is hidden in a vanishingly small component of the system, having amplitude factorially small in the number of graph vertices. Because of this, even quantum algorithms such as Grover's search are not fast enough to distill this compenent efficiently. In order to extract the information we call upon techniques in nonlinear quantum computing to provide the required speed-up. All up, the linear quantum circuit requires $\mathcal{O}(n^3 \log^3 (n) \log \log (n))$ elementary quantum gates and the nonlinear evolution under the Gross-Pitaevskii equation has a time scaling of $\mathcal{O}(\frac{1}{g} n^2 \log^3 (n) \log \log (n))$, where $n$ is the number of vertices in each graph and $g$ is the strength of the Gross-Pitaveskii non-linearity.

The non-linear component of our algorithm is able to distinguish between no solutions ($m=0$) and at least one solution ($m>0$). Future work will involve extending this to efficiently find the number of solutions $m$ -- that is, solving problems in the $\#P$ complexity class. Our algorithm can also be readily adapted to efficiently determine if two graphs are isomorphic. 

Further to this, the nonlinear quantum component of our algorithm can in principle be applied to efficiently solve computational problems in the NP optimisation class. By formulating the problem in the Ising model \cite{Ising}, the maximum energy can be constructed to correspond to the optimal solution(s) so that our algorithm can determine this solution amongst exponentially many other potential solutions. Hence, with this algorithm we illustrate the theoretical power of non-linear quantum computation to solve problems which are considered infeasible for classical computers and (linear) quantum computers alike.

\section{Acknowledgements}

We thank Jason Twamley and Lyle Noakes for valuable comments and discussions.

\bibliographystyle{unsrt}
\bibliography{ref.bib}

\appendix

\section{Proofs}\label{app:proofzoomok}

\subsection{Proof of Theorem \ref{thm:zoomok}}

\begin{thm*}For any $n$ and any $1 \leq s \leq n!$, we can find a value $k$ such that the $\lfloor ks \rfloor$the candidate state is suitable for zooming. Specifically,
\begin{enumerate}[label=(\alph*)]
\item For $1 \leq n \leq 5$, we can determine candidate states suitable for zooming on a case-by-case basis.
\item For $n \geq 5$ and $2 \leq s \leq n!$, the $\lfloor s/2 \rfloor $th candidate state is always suitable for zooming.
\end{enumerate}
\end{thm*}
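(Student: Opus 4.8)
The plan is to fix $k=1/2$ throughout part~(b), so that the state whose overlap we must control is the $\lfloor s/2\rfloor$th candidate state, and to compute $\cos(\theta(T(s/n!))/2)$ --- by Definition~\ref{defn:suit} and Fig.~\ref{fig:overlap}, the overlap with $\ket{0}$ of this state after the nonlinear evolution and reorientation of Procedure~B --- as an explicit function of the two ratios $\mu=s/n!$ and $\nu=\lfloor s/2\rfloor/n!$. The goal is then to show this function lies in the window $[1/\sqrt{2},\,1-\epsilon)$. As a warm-up, evaluating the endpoint relation Eq.\cref{eqn:angle} at the stopping time Eq.\cref{eqn:end_time}, where $\tanh(gT/2)=\cos(\alpha_0/2)$ with $\cos(\alpha_0/2)$ from Eq.\cref{eqn:a0}, collapses the \emph{on-arc} overlap of any point to the M\"obius form $c\mapsto(c-\cos(\alpha_0/2))/(1-\cos(\alpha_0/2)\,c)$, where $c$ is that point's initial overlap with the $0$th state.

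I would first settle the boundary $s=n!$ exactly: here $\alpha_0=\pi$, so Eq.\cref{eqn:end_time} gives $T=0$ (the $0$th and $(n!)$th states start orthogonal), and by Eq.\cref{eqn:houtput} the $\lfloor n!/2\rfloor$th state sits at overlap exactly $1/\sqrt{2}$ with $\ket{0}$. Thus the lower bound of Definition~\ref{defn:suit} is attained tightly at $s=n!$, which also confirms $1/\sqrt{2}$ is the correct threshold and suggests the minimum of the target function occurs at the largest $\mu$. The substantive regime is $2\le s<n!$, where $T>0$ and the $\lfloor s/2\rfloor$th state drifts off the great circle through the two endpoints under the combined flow Eq.\cref{eqn:diffnlse} and correction Eq.\cref{eqn:lin_Hamiltonian}. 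This drift cannot be ignored: the naive on-arc M\"obius value above tends to $3/5<1/\sqrt{2}$ as $\mu\to0$, so the true overlap must be recovered by integrating the interior point's Bloch-sphere trajectory directly in the oriented frame of Fig.~\ref{fig:coords}, using that the Gross--Pitaevskii term fixes each state's computational latitude $z=2|\langle 0|\psi\rangle|^2-1$, so that the overlap with $\ket{0}$ moves only through the linear rotations.

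With a closed expression $g(\mu)$ for $\cos(\theta(T(s/n!))/2)$ in hand, the rest is a one-variable estimate: show that over $\mu\in(0,1]$ the function $g$ stays between its boundary value $1/\sqrt{2}$ at $\mu=1$ (its minimum) and an interior maximum of $2\sqrt{2}/3$, which yields $\epsilon=1-2\sqrt{2}/3=\mathcal{O}(1)$ (matching the bound recorded after the theorem statement). Two corrections must then be absorbed. First, the floor makes $\nu$ differ from $\mu/2$, worst at small $s$; I would bound this discretisation error and invoke $n\ge5$ (so $n!\ge120$) to guarantee it never drives $g$ below $1/\sqrt{2}$. Second, part~(a) --- the regime $n\le5$, where $n!\in\{1,2,6,24,120\}$ is too coarse for the continuous argument --- is handled by enumerating the finitely many $s\in\{1,\dots,n!\}$ and selecting a suitable $k$ case by case.

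The main obstacle I expect is the interior-point step: the clean endpoint formula Eq.\cref{eqn:angle} does not carry over to an interior point, because the sub-arc from the $0$th to the $\lfloor s/2\rfloor$th state has a different midpoint and tilt than the full $0$th-to-$s$th arc, so one genuinely has to track the off-arc trajectory through the nonlinear-plus-correction evolution rather than reusing the endpoint law. Establishing, from that trajectory, the uniform lower bound $1/\sqrt{2}$ for every $2\le s<n!$ (tight only at $s=n!$) together with the clean cap $2\sqrt{2}/3$ is where essentially all the effort will be concentrated.
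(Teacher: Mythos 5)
Your proposal stalls exactly where the paper's proof does its real work, and the plan you give for that step would not succeed. You defer the central estimate to ``integrating the interior point's Bloch-sphere trajectory directly,'' and you acknowledge that essentially all the effort lies there. But the observation you offer to make this tractable --- that the Gross--Pitaevskii term preserves the latitude $z=2|\langle 0|\psi\rangle|^2-1$ --- refers to the fixed frame of Fig.~\ref{fig:coords}, in which the $0$th candidate state is \emph{not} at the pole $\ket{0}$; the quantity you must control is the angle between two \emph{moving} states (the $\lfloor s/2\rfloor$th and the $0$th candidate states, which are aligned with $\ket{0}$, $\ket{1}$ only by a rotation \emph{after} the evolution), and the time-dependent correction Eq.\cref{eqn:lin_Hamiltonian} does change latitudes. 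The paper never performs this integration. Instead it uses two one-sided comparison arguments that your proposal lacks: (i) for the lower bound, Lemma \ref{lem:halfway} shows that for $k\le 1/2$ the $(ks)$th state lies in the half of the arc nearer the $0$th state, whence $\cos(\theta(T(s/n!))/2)\ge 1/\sqrt{2}$ with no trajectory computation at all; (ii) for the upper bound, one evolves a \emph{symmetric} sub-arc (the $(ks)$th state paired with its reflection about the midpoint of $S$) under the endpoint law for the same time $T(s/n!)$, giving the M\"obius formula Eq.\cref{eqn:muts} in the symmetric-pair overlap $\cos(\mu_0/2)$; since this hypothetical ``approximate'' state necessarily ends closer to $\ket{0}$ than the true one, substituting into the quarter-circle projection Eq.\cref{eqn:thet2} yields a valid upper bound. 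Your warm-up correctly shows that the endpoint law cannot be applied naively to a point's overlap with the $0$th state (the $3/5$ limit), but you then abandon the M\"obius law altogether rather than reusing it on symmetric pairs, which is the paper's key idea.

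Two further claims in your plan are wrong in ways that matter. First, the interior maximum of the continuous $k=1/2$ function is $\sqrt{2/3}\approx 0.82$ at $s=n!/2$, not $2\sqrt{2}/3\approx 0.94$: the constant $2\sqrt{2}/3$ is the \emph{minimum} of the intermediate quantity $\cos(\mu(T(s/n!))/2)$, and it resurfaces as the final upper bound on $\cos(\theta(T(s/n!))/2)$ only through the floor discretization at small odd $s$ (at $s=3$ one has $\lfloor s/2\rfloor=1$, giving the bound $2\sqrt{2}/3$ of Eq.\cref{eqn:s3}). Second, your discretization worry points the wrong way: replacing $s/2$ by $\lfloor s/2\rfloor$ moves the state toward the $0$th state, so it can only \emph{increase} the overlap with $\ket{0}$, threatening the $1-\epsilon$ upper bound and never the $1/\sqrt{2}$ lower bound (which survives precisely because $\lfloor s/2\rfloor/s\le 1/2$ keeps Lemma \ref{lem:halfway} applicable). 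Consequently the discrete analysis --- explicit bounds for $s=3,5$, and the $\sqrt{2/3}+\epsilon_n$ estimate for odd $s\ge 7$ with $n\ge 5$, which is where the hypothesis $n\ge 5$ actually enters --- is an essential part of the proof that your plan does not account for.
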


\begin{proof}
Consider the candidate state number as a continuous variable. That is, define the $s$th candidate state to be the point along the candidate state arc which has inner product with the $0$th candidate state equal to \[\frac{1-(s/n!)}{\sqrt{1-2(s/n!)+2(s/n!)^2}}\, ,\]
even if $s$ is not an integer. 

\textbf{Lower bound.}
Consider the candidate states immediately prior to nonlinear evolution, at time $t=0$ (refer to the left diagram in Fig. \ref{fig:overlap}). Immediately after production by the circuit in Fig. \ref{fig:hadamardtest}, the inner product of the $0$th and $s$th candidate states is 
	\begin{equation}\label{eqn:alpha0}
    	\cos\bigg(\frac{\alpha_0}{2}\bigg)= \frac{1-(s/n!)}{\sqrt{1-2(s/n!)+2(s/n!)^2}}\, .
    \end{equation}
Similarly, the inner product of the $0$th and $(ks)$th candidate states is
	\begin{equation}\label{eqn:theta0}
    \cos\bigg(\frac{\theta_0}{2}\bigg)=\frac{1-(ks/n!)}{\sqrt{1-2(ks/n!)+2(ks/n!)^2}}\, .
    \end{equation}
Define $\mu_0/2$ to be the angle subtended by the $ks$th candidate state and its projection when reflected about the midpoint of the arc between the $0$th and $s$th candidate states. Via simple trigonometry, we have
	\begin{align}\label{eqn:mu0}
    \cos\bigg(\frac{\mu_0}{2}\bigg)&=\cos(\alpha_0/2-\theta_0) = \cos\bigg(\frac{\alpha_0}{2}\bigg)\cos(\theta_0) + \sin\bigg(\frac{\alpha_0}{2}\bigg)\sin(\theta_0)\, .
    \end{align}

Due to the orientation of the candidate state arc imposed in Fig. \ref{fig:coords}, candidate states will move to opposite poles of the Bloch sphere depending on whether they are closer to the $0$th candidate state or the $s$th. All angles and figures defined in this section thus far have intuitively assumed that the $(ks)$th candidate state is closer to the $0$th candidate state than the $s$th. With the assistance of Lemma \ref{lem:halfway}, we have found and imposed the upper limit of $1/2$ on $k$ in order to ensure the consistency of all figures and angles in this section.

\begin{lem}\label{lem:halfway}
If $k \leq 1/2$, then the $(k s)$th candidate state is closer to the $0$th candidate state than the $s$th for all $1 \leq s \leq n!$.
\end{lem}

\begin{proof}
The $ (k s )$th candidate state is closer to the $0$th candidate state than the $s$th -- so it subtends an angle of at most $\alpha_0/4$ with the $0$th candidate state. The inner product with the $0$th candidate state of such a point is
\begin{align*}
\cos \bigg(\frac{\alpha_0}{4}\bigg) &= \sqrt{\frac{1}{2}\bigg(1+\frac{1-(s/n!)}{\sqrt{1-2(s/n)+2(s/n!)^2}}}\bigg)\, .
\end{align*}
The candidate state having this inner product, the $s'$th, is such that
\begin{align*}
			\frac{1-(s'/n!)}{\sqrt{1-2(s'/n!)+2(s'/n!)^2}} &= \sqrt{\frac{1}{2}\bigg(1+\frac{1-(s/n!)}{\sqrt{1-2(s/n)+2(s/n!)^2}}}\bigg)\, ,
\intertext{giving}
		\frac{s'}{n!} &= \frac{1}{1+\sqrt{1-2(s/n!)+2(s/n!)^2}} \frac{s}{n!} \\ &\geq \frac{s/2}{n!}\, .
\end{align*}
Therefore the $(ks)$th candidate state is guaranteed to be closer to the $0$th candidate state than the $s$th as long as $k \leq 1/2$.
\end{proof}
Lemma \ref{lem:halfway} also tells us that $\cos(\theta(T(s/n!))/2) \geq 1/\sqrt{2}$ as long as $k \leq 1/2$, because the minimum possible value of $\cos(\alpha_0/2)$ is $1/\sqrt{2}$ (and the angle cannot increase throughout the evolution). For this result to be meaningful, we must restore the status of $s$ as a discrete variable. In Measurements \ref{meas:meas1} and \ref{meas:meas2}, we take the $\lfloor ks \rfloor$th candidate state. In doing so, we pick either the $(ks)$th or $(ks-1)$th candidate state. In the latter case, we are effectively using $\tilde{k}=k-\frac{1}{s}$ as our $k$ value, which is less than $1/2$ and so our result still holds.

\textbf{Upper bound.} \textit{\underline{Aside:}} First, let us make it abundantly clear that the position of the ($ks$)th candidate \textit{after} nonlinear evolution for time $T(s/n!)$ discussed throughout this section (and in all figures referring to the position of the ($ks$)th candidate state after time $T(s/n!)$) is not the actual position of the ($ks$)th candidate state, but an approximation. The true position of the ($ks$)th candidate state after nonlinear evolution for time $T(s/n!)$ is some distance away from the arc connecting the $0$th and $s$th candidate states, as the evolution distorts the arc from its initial shape (see Fig. \ref{fig:lemma2}). This displacement need not cause concern, as we are in pursuit of an upper bound for $\cos(\theta(T(s/n!))/2)$, not a lower bound. This aside and Fig. \ref{fig:lemma2} justify this statement, and are the only parts of the proof that make reference to this true position of the ($ks$)th candidate state.

Consider the assortment of candidate states after nonlinear evolution for time $T(s/n!)$. Recall that the nonlinear evolution is corrected by a linear amount (see Eq.\cref{eqn:lin_Hamiltonian}) such that only the endpoints (and midpoint) of $S$ remain along the extension of the arc $S$. The orientation of $S$ was such that along this arc, points separate more quickly than if they were aligned along any other arc of the Bloch sphere (see Fig. \ref{fig:coords}). Therefore, any non-endpoint, non-midpoint candidate state will leave this arc and drag behind it as the $0$th and $s$th candidate states are separated. After nonlinear evolution for time $T(s/n!)$ as per Procedure B, this leads to the distortion as displayed in the dark curve in Fig. \ref{fig:lemma2}.

\begin{figure}[H]
	\centering
    \includegraphics[scale=0.8]{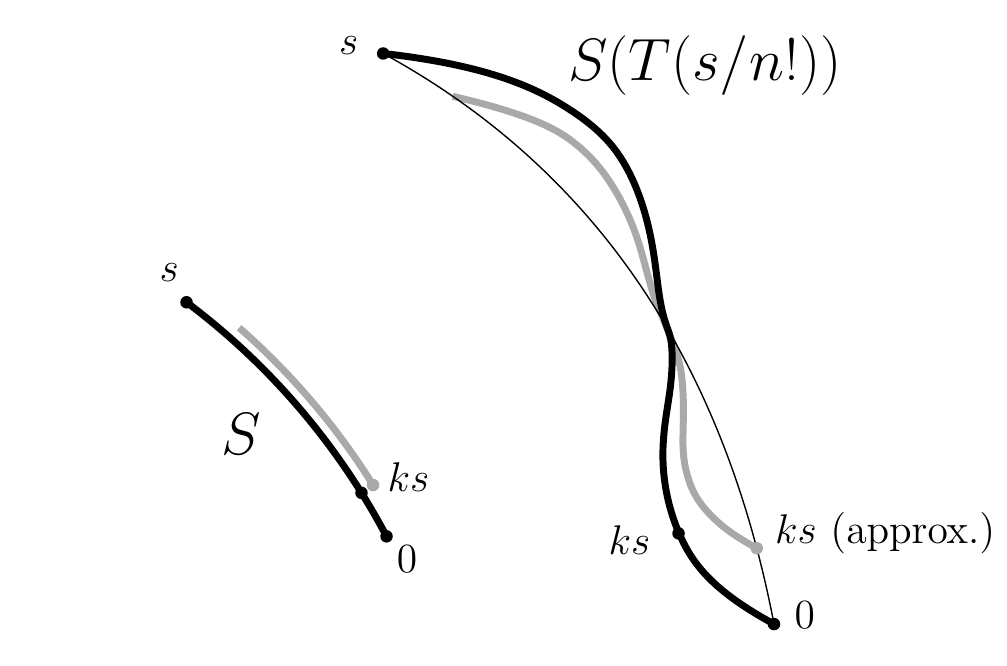}
    \caption{Left: prior to nonlinear evolution. Right: after nonlinear evolution for time $T(s/n!)$. The actual candidate states progress towards opposite poles more slowly than their approximate counterparts. This means that the $(ks)$th candidate state is further from the $0$th than the approximate $(ks)$th candidate state for all $k \leq 1/2$. We can determine the position of the approximate $(ks)$th candidate state by considering where the $(ks)$th candidate state would end up after nonlinear evolution for time $T(s/n!)$ if we were to parameterise the nonlinear evolution for an arc with endpoints initially subtending an angle of $\mu(t)$ (grey line), rather than $\alpha(t)$ (black line).}
    \label{fig:lemma2}
\end{figure}

There is a sub-arc of $S$ with the $(ks)$th candidate state as one endpoint and the other chosen symmetrically about the midpoint of $S$. Let the angle subtended by its endpoints in Hilbert space be $\mu(t)$, and let $\mu_0 =\mu(0)$. Consider the result if we performed the procedure of Section \ref{sec:meas} upon this sub-arc instead of $S$, but maintained the nonlinear evolution time of $T(s/n!)$. In this instance, Eq.\cref{eqn:angle} would become
\begin{align} \label{eqn:mut}
	\cos \left(\frac{\mu(t)}{2}\right) =  \frac{\cos\frac{\mu_0}{2}  \cosh \frac{gt}{2}  -  \sinh\frac{gt}{2}}{\cosh \frac{gt}{2} - \cos \frac{\mu_0}{2} \sinh\frac{gt}{2}}      \, .
\end{align}
and the nonlinear evolution, previously Eq.\cref{eqn:lin_Hamiltonian}, would instead be
\begin{align}\label{eqn:muham}
	H(t) = \frac{g}{4} \cos (\frac{\mu(t)}{2}) \sigma_x\, .
\end{align}
The nonlinear evolution time of $T(s/n!)$, however, is not long enough to force the endpoints of this smaller arc all the way to opposite poles of the Bloch sphere. Recalling that $T(s/n!) = 2\ln(\cot(\alpha_0/4))/g$, we have
\begin{align}\label{eqn:muts}
	\cos \bigg( \frac{\mu(T(s/n!))}{2} \bigg) &= \frac{\cos(\mu_0/2) - \cos(\alpha_0/2)}{1-\cos(\mu_0/2) \cos(\alpha_0/2)}\, .
\end{align}
In this hypothetical scenario, at the end of nonlinear evolution, this ``approximate" $(ks)$th candidate state would necessarily lie closer to $\ket{0}$ than the actual $(ks)$th candidate state does -- so the angle $\mu(T(s/n!))/2$ is greater than the angle subtended by the true $(ks)$th state and its symmetric point after nonlinear evolution for time $T(s/n!)$. Again, this works in our favour for proving Theorem \ref{thm:zoomok}. \textit{\underline{End aside.}}

    \begin{figure}[H]
        \begin{subfigure}[b]{0.475\textwidth}
            \centering
            \includegraphics[width=\textwidth]{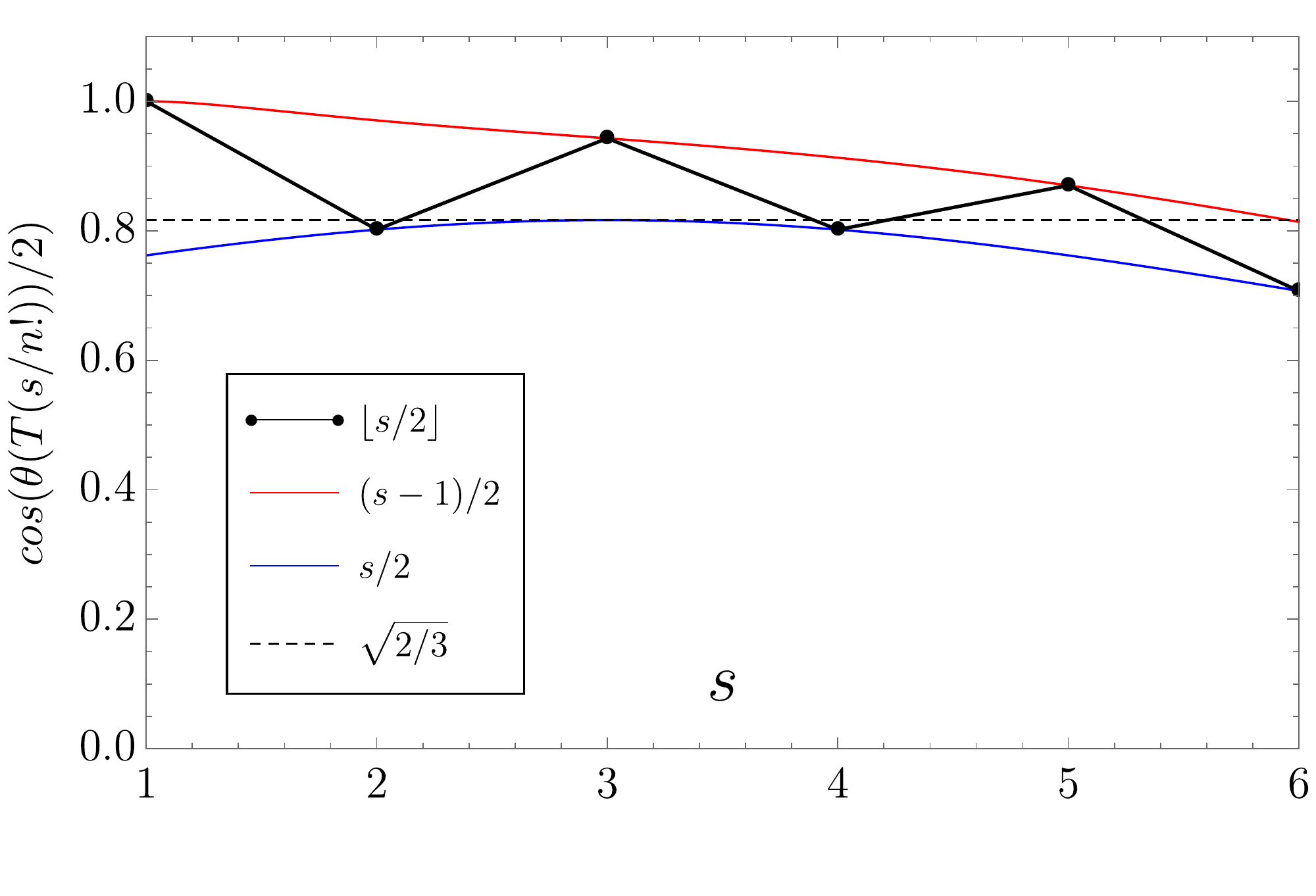}
            \caption*{$n=3$}%
            \label{fig:n_3}
        \end{subfigure}
        \quad
        \begin{subfigure}[b]{0.475\textwidth}  
            \centering 
            \includegraphics[width=\textwidth]{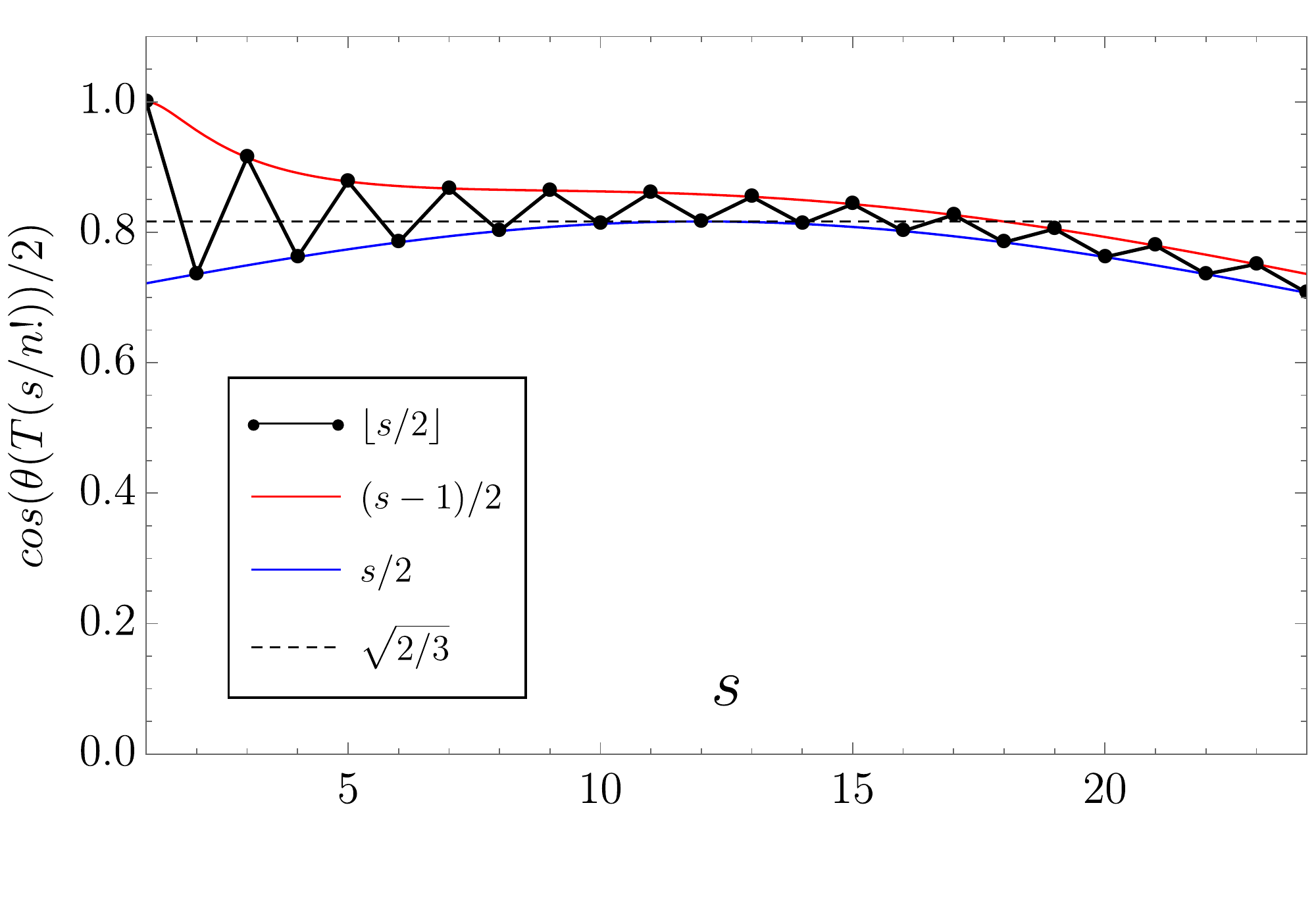}
            \caption*{$n=4$}
            \label{fig:n_4}
        \end{subfigure}
        \vskip\baselineskip
        \begin{subfigure}[b]{0.475\textwidth}   
            \centering 
            \includegraphics[width=\textwidth]{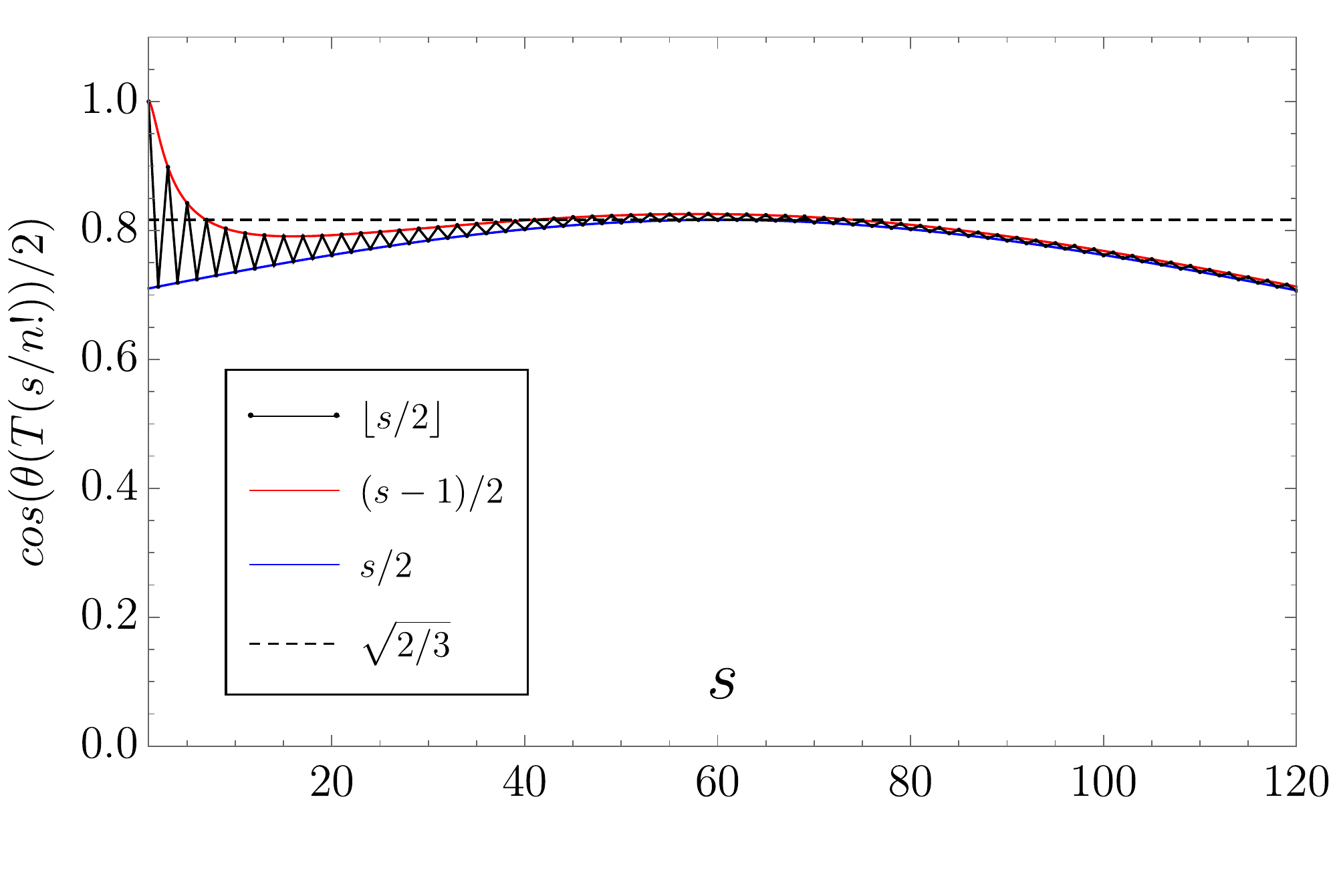}
            \caption*{$n=5$}
            \label{fig:n_5}
        \end{subfigure}
        \quad
        \begin{subfigure}[b]{0.475\textwidth}   
            \centering 
            \includegraphics[width=\textwidth]{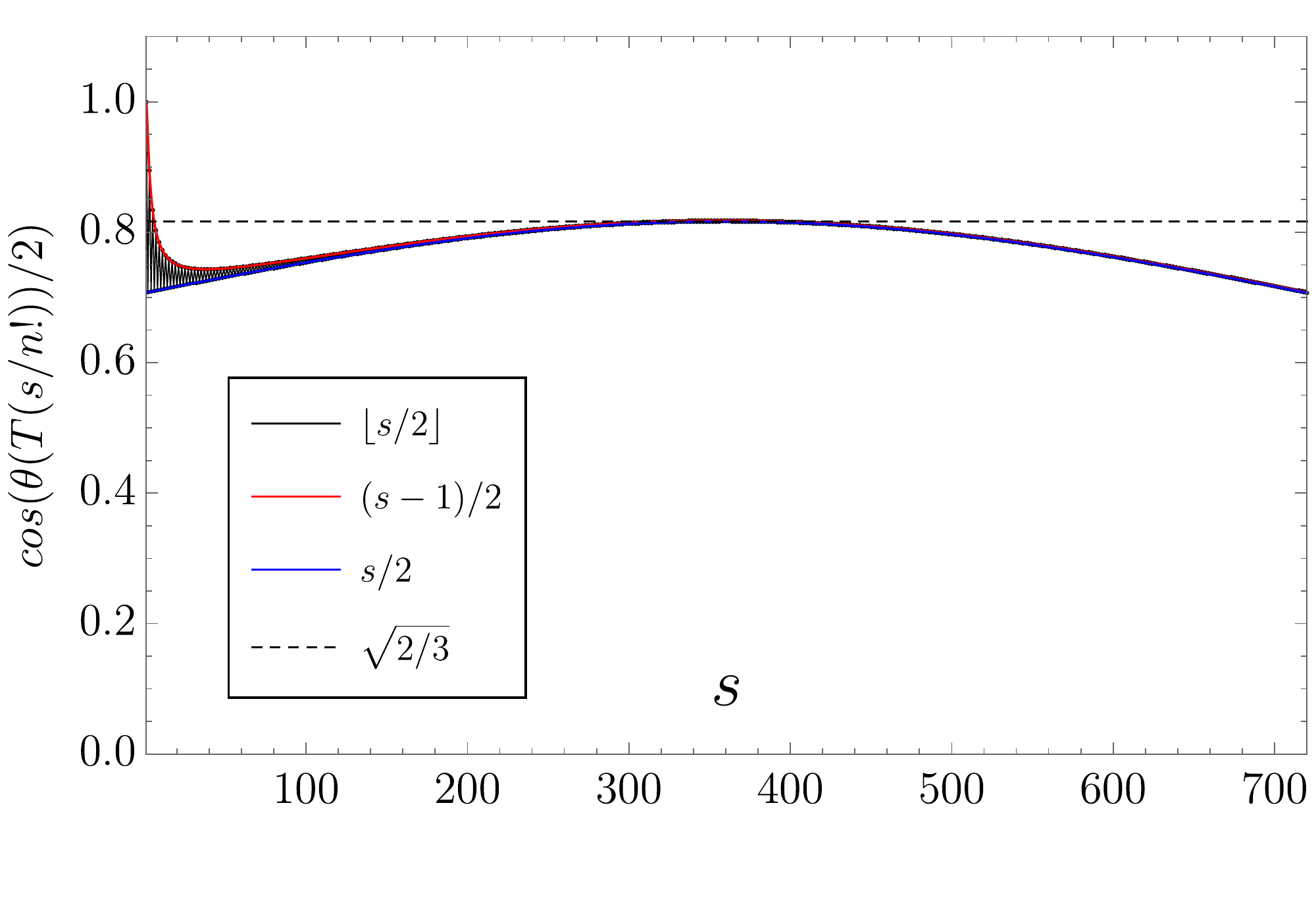}
            \caption*{$n=6$}%
            \label{fig:n_6}
        \end{subfigure}
        \caption{The data points are values of $\cos(\theta(T(s/n!))/2)$ for $1\leq s \leq n!$. The red curve is the upper bound, obtained when we replace $\lfloor s/2 \rfloor$ with $(s-1)/2$ in $\cos(\theta(T(s/n!))/2)$. Likewise, the blue curve is the lower bound we obtain by replacing $\lfloor s/2 \rfloor$ with $s/2$. For $n \geq 5$, we can safely assume that $\cos(\theta(T(s/n!))/2) \leq \sqrt{2/3} + \epsilon_n$ in the region $s>7$, where $\epsilon_n$ is a vanishingly small quantity corresponding to the maximum of the blue curve.}
        \label{fig:alldat}
    \end{figure}

Consider the assortment of candidate states after nonlinear evolution for time $T(s/n!)$. We orient this curve with the measurement basis $\{\ket{0},\ket{1}\}$ by aligning the $0$th and $s$th candidate states with $\ket{0}$ and $\ket{1}$, respectively. The curve is no longer an arc after nonlinear evolution, but for our purposes -- namely, the impending measurement of the qubit -- we need only consider the components of the candidate states in the $\ket{0}$ and $\ket{1}$ directions. The candidate state curve is then a quarter-circle when projected onto the measurement basis, as visualised in the right diagram in Fig. \ref{fig:overlap}. Via simple trigonometry, we have
\begin{align}
	\cos\bigg( \frac{\theta(T(s/n!))}{2}\bigg) &= \cos\bigg(\frac{\pi}{4} - \frac{\mu(T(s/n!))}{4}\bigg)\\
    				\label{eqn:thet2}		&= \frac{1}{2}\bigg(\sqrt{1+\cos(\mu(T(s/n!))/2)} + \sqrt{1- \cos(\mu(T(s/n!))/2)}\bigg)\, ,
\end{align}
which is a decreasing function of $\cos(\mu(T(s/n!))/2)$. Recall Eq.\cref{eqn:muts}:
\begin{align}
	\cos\left(\frac{\mu(T(s/n!))}{2}\right) &= \frac{\cos(\mu_0/2) - \cos(\alpha_0/2)}{1-\cos(\mu_0/2) \cos(\alpha_0/2)}\\
    										&= \frac{(\cos(\alpha_0/2)\cos(\theta_0)+\sin(\alpha_0/2)\sin(\theta_0))-\cos(\alpha_0/2)}{1-(\cos(\alpha_0/2)\cos(\theta_0)+\sin(\alpha_0/2)\sin(\alpha_0/2))\cos(\alpha_0/2)}\, ,
\end{align}
which, using Eqs.\cref{eqn:alpha0} and \cref{eqn:mu0} and basic trigonometry, 
\begin{equation}\label{eqn:muT}= \frac{\left(\frac{1-s/n!}{\sqrt{1-2s/n!+2(s/n!)^2}} \left(2\frac{(1-ks/n!)^2}{1-2ks/n!+2(ks/n!)^2}-1\right)+\frac{s/n!}{\sqrt{1-2s/n!+2(s/n!)}}\left(\frac{2ks/n!(1-ks/n!)}{1-2ks/n!+2(ks/n!)^2}\right)\right)-\frac{1-s/n!}{\sqrt{1-2s/n!+2(s/n!)^2}}}{1-\left(\frac{1-s/n!}{\sqrt{1-2s/n!+2(s/n!)^2}} \left(2\frac{(1-ks/n!)^2}{1-2ks/n!+2(ks/n!)^2}-1\right)+\frac{s/n!}{\sqrt{1-2s/n!+2(s/n!)}}\left(\frac{2ks/n!(1-ks/n!)}{1-2ks/n!+2(ks/n!)^2}\right)\right)\frac{1-s/n!}{\sqrt{1-2s/n!+2(s/n!)^2}}}\, .
\end{equation}
Note that, for $k=1/2$, this has a minimum of $2\sqrt{2}/3$ at $s=n!/2$. We substitute Eq.\cref{eqn:muT} into Eq.\cref{eqn:thet2} to obtain a large expression for $\cos(\theta(T(s/n!))/2)$. At this stage, it is convenient to restore the status of $s$ as a discrete variable by replacing $ks$ with $\lfloor s/2\rfloor$. Fig. \ref{fig:alldat} shows the values of $\cos(\theta(T(s/n!))/2)$ with $1 \leq s \leq n!$ for graphs of size $n=3,4,5$ and $6$, alternating between smooth functions which serve as upper and lower bounds:
\begin{align*}
	\cos\left(\frac{\theta(T(s/n!))}{2}\right)\bigg\vert_{\lfloor s/2\rfloor \longleftrightarrow s/2} \leq \cos\left(\frac{\theta(T(s/n!))}{2}\right) \leq \cos\left(\frac{\theta(T(s/n!))}{2}\right)\bigg\vert_{\lfloor s/2\rfloor \longleftrightarrow (s-1)/2}\, .
\end{align*}

Inspired by Fig. \ref{fig:alldat}, we consider only graphs with five or more vertices and show that $\cos(\theta(T(s/n!))/2)$ is always less than $\sqrt{2/3}+\epsilon$ (for a known quantity $\epsilon$) when $7 \leq s \leq n! $. When $s$ is even, $\cos(\theta(T(s/n!))/2)$ appears on the lower bound function (blue curve) in Fig. \ref{fig:alldat}, which has a maximum of $\sqrt{2/3}$ (using the fact that $\cos(\mu(T(s/n!))/2)$ from Eq.\cref{eqn:muT} has a minimum of $2\sqrt{2}/3$ at $s=n!/2$ when $k=1/2$). Therefore, for all even $s$,
\[\cos\left(\frac{\theta(T(s/n!))}{2}\right) \leq \sqrt{\frac{2}{3}}\, ,\]
and so we have shown that the $\lfloor s/2 \rfloor$th candidate state is suitable for zooming (via Definition \ref{defn:suit}) when $s$ is even.

When $s$ is odd, take the expression for $\cos(\theta(T(s/n!))/2)$ for fixed values of $s$ and varying values of $n$ to observe that
\begin{align}
	\cos\left(\frac{\theta(T(1/n!))}{2}\right) &= 1\, , \nonumber \\
    \frac{2}{\sqrt{5}} < \cos\left(\frac{\theta(T(3/n!))}{2}\right)& \leq \frac{2 \sqrt{2}}{3}\, , \label{eqn:s3} \\
    \frac{3}{\sqrt{13}} < \cos\left(\frac{\theta(T(5/n!))}{2}\right)& \leq \frac{18 \sqrt{2}}{29}\, , \label{eqn:s5}
    \intertext{and that, for $n \geq 5$,}
    \frac{4}{5} < \cos\left(\frac{\theta(T(7/n!))}{2}\right)& 
    < \sqrt{\frac{2}{3}}\, . \label{eqn:slow}
\end{align}
As $s$ is odd, $\cos(\theta(T(s/n!))/2)$ lands on the upper bound function (red curve in Fig. \ref{fig:alldat}). For $n \geq 5$, the upper bound function has a maximum value, slightly above $\sqrt{2/3}$, occurring somewhere between $s = n!/2-1$ and $s=n!/2-2$. Eq.\cref{eqn:slow} assures us that the value of $\cos(\theta(T(s/n!))/2)$ is less than whatever this maximum is, as long as $s \geq 7$. We evaluate $\cos(\theta(T(s/n!))/2)$ at $s=n!/2-2$ to approximate the maximum as

\begin{equation}\begin{aligned}\frac{1}{2} &\left(\sqrt{\frac{\left(2 \sqrt{\frac{32}{(n!)^2}+2}+3\right) (n!)^4-4 \left(4 \sqrt{\frac{32}{(n!)^2}+2}+5\right) (n!)^3+12 \left(2 \sqrt{\frac{32}{(n!)^2}+2}+5\right) (n!)^2-192 n!+576}{n! (n! (n! (3 n!-20)+60)-192)+576}}\right.\\ & \quad +\left.\sqrt{\frac{\left(3-2 \sqrt{\frac{32}{(n!)^2}+2}\right) (n!)^4+4 \left(4 \sqrt{\frac{32}{(n!)^2}+2}-5\right) (n!)^3+12 \left(5-2 \sqrt{\frac{32}{(n!)^2}+2}\right) (n!)^2-192 n!+576}{n! (n! (n! (3 n!-20)+60)-192)+576}}\right)\, .\label{eqn:en}\end{aligned}\end{equation}
For the first two values of $n$ for which this maximum exists, we obtain
\begin{alignat*}{2}
	&n=5: \quad && \cos\left(\frac{\theta(T(s/n!))}{2}\right) \leq \sqrt{\frac{2}{3}} + \underbrace{0.00907762\dots}_{\epsilon_5}\\
    &n=6: \quad && \cos\left(\frac{\theta(T(s/n!))}{2}\right) \leq \sqrt{\frac{2}{3}} + \underbrace{0.00151206\dots}_{\epsilon_6}.
\end{alignat*}
We can calculate $\epsilon_n$ for any graph size $n$ by calculating the difference between \cref{eqn:en} and $\sqrt{2/3}$. However, as seen in Fig. \ref{fig:en}, $\epsilon_n$ is a decreasing function of $n$ and, for any $n \geq 5$, so it suffices to use $\sqrt{2/3}+\epsilon_5$ as an upper bound for $\cos(\theta(T(s/n!))2)$.

\begin{figure}[H]
	\centering
    \includegraphics[width=0.475\textwidth]{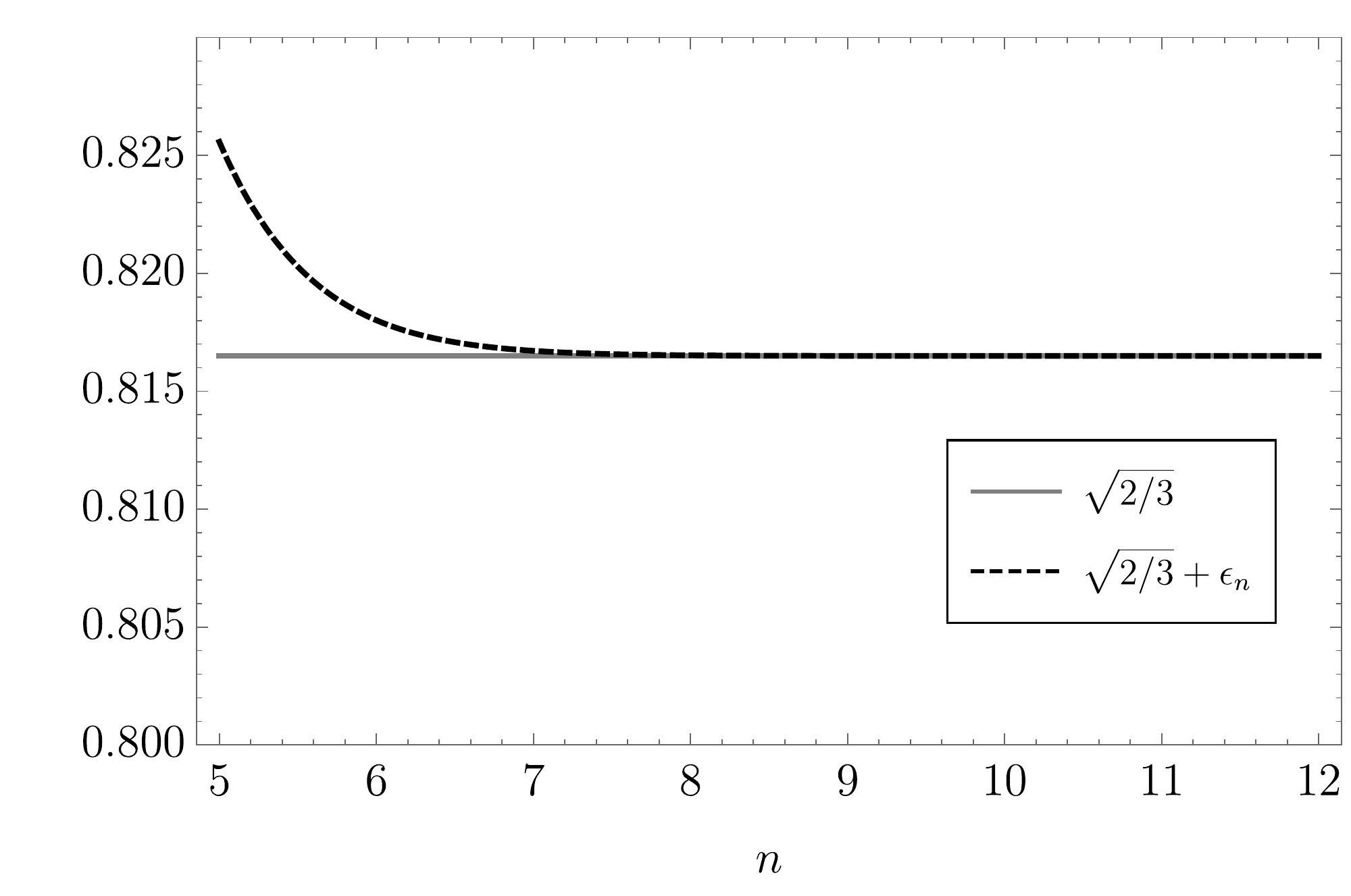}
    \caption{When $s$ is odd, the maximum value of $\cos(\theta(T(s/n!))/2)$ approaches $\sqrt{2/3}$ as $n$ increases.}
    \label{fig:en}
\end{figure}

Therefore we have shown that for $n \geq 5$ and odd $s \geq 7$ that the $\lfloor s/2 \rfloor$th candidate state is suitable for zooming (via Definition \ref{defn:suit}). The results in Eq.\cref{eqn:s3} and Eq.\cref{eqn:s5} also show that for $s=3$ and $s=5$, the $\lfloor s/2 \rfloor$th candidate state is also suitable for zooming. This completes the proof of part (b) of Theorem \ref{thm:zoomok}. Part (a) of Theorem \ref{thm:zoomok} is trivial: as there are only 153 pairs $(s,n)$ for $n \leq 4$, all different values of $\cos(\theta(T(s/n!))/2)$ will not draw arbitrarily close to 1. We can draw from a list of these 153 different values to determine which candidate states are suitable for zooming on a case-by-case basis. 

To summarise, for $ n \geq 5$, we have
\begin{alignat*}{3}
	&s = 1, && \quad \cos(\theta(T(s/n!))/2) = 1 &&\qquad\\
    &s = 2, && \quad 1/\sqrt{2} < \cos(\theta(T(s/n!))/2) < \sqrt{2/3} && \approx 0.82 \qquad \\
    &s = 3, &&\quad 1/\sqrt{2} < \cos(\theta(T(s/n!))/2) \leq 2\sqrt{2}/2 && \approx 0.94 \qquad\\
    &s = 4, &&\quad 1/\sqrt{2} < \cos(\theta(T(s/n!))/2) \leq \sqrt{2/3} \qquad \\
    &s = 5, &&\quad 1/\sqrt{2} < \cos(\theta(T(s/n!))/2) \leq 18\sqrt{2}/29 && \approx 0.88\qquad \\
    &s = 6, &&\quad 1/\sqrt{2} < \cos(\theta(T(s/n!))/2) \leq \sqrt{2/3}\\
    &s \geq 7, &&\quad 1/\sqrt{2} < \cos(\theta(T(s/n!))/2) \leq \sqrt{2/3} + \epsilon_5 && \approx 0.83\, .\\
\end{alignat*}

\end{proof}

\subsection{Proof of Lemma \ref{lem:timecost}}\label{proof:timecost}
\begin{proof}Using Eq.\cref{eqn:a0} and Eq.\cref{eqn:end_time}, we have
\begin{align*}
	T(s/n!) &=\frac{2}{g} \log(\cot(\frac{1}{2}\arccos(\frac{1-s/n!}{1-2s/n!+2(s/n!)^2})))\, .
\end{align*}
Treating $s/n!$ as a continuous  variable, we can take the Taylor series of $T(s/n!)$ about $0$ to give us an expression which is strictly less than or equal to $T(s/n!)$:
\begin{align*}
	T(s/n!) \leq \frac{2}{g} \left( \log \frac{2}{s/n!} - s/n! - \frac{(s/n!)^2}{4} + \frac{(s/n!)^3}{6}+\frac{13(s/n!)^4}{32} + \frac{17(s/n!)^5}{40} + \mathcal{O}(s/n!)^6 \right)
\end{align*}
By taking the series to the fifth order in $s/n!$, the above inequality holds. Then,
\begin{align*}
	\sum_{i=0}^{\lfloor \log_2 s\rfloor} T(\lfloor s/2^i\rfloor) &\leq \frac{2}{g}\sum_{i=0}^{\log_2 s} \left( \log \frac{2}{s/n!} - s/n! - \frac{(s/n!)^2}{4} + \frac{(s/n!)^3}{6}+\frac{13(s/n!)^4}{32} + \frac{17(s/n!)^5}{40} + \mathcal{O}(s/n!)^6 \right)\\
    &< \frac{2}{g}\left(\underbrace{\log\left(\frac{2s}{(s/(n!)^{\log 2})^{\log_2 2s}}\right)}_{O(\mathrm{poly}(n))} + \underbrace{\frac{68(s/n!)^5}{155}+\frac{13(s/n!)^4}{30}+\frac{4(s/n!)^3}{21}}_{\ll 1}\right)\\
    &\approx \frac{2}{g} \log(2s)\left(\log_2\frac{n!}{\sqrt{s}}+1\right)\, ,
\end{align*}
discarding the infinitesimal polynomial terms.\end{proof}

\subsection{Comment on Theorem \ref{thm:prob}: suitable choices for the ensemble size $\omega$}\label{proof:ensemble}
Recall the proof of Theorem \ref{thm:prob}. Procedure A has the highest chance of failure when $s=3$, as shown in Appendix \ref{proof:timecost}. During the course of an entire run-through of Algorithm \ref{alg:gcvnqs}, the iterations of Procedure B when $s = 3$ are infrequent enough that we will neglect their slight lowering of the success probability of Algorithm \ref{alg:gcvnqs}, and instead estimate the overall probability of success as
\[P(\textrm{Algorithm \ref{alg:gcvnqs} does not fail})	\approx \left(1 - \left(\sqrt{2/3}+\epsilon_n\right)^{2\omega} \right)^{\log_2(E_{\max})}\, . \]
Using this definition, we observe the pattern shown in Figure \ref{fig:psucc}. Most notably, the probability of Algorithm \ref{alg:gcvnqs} performing successfully is greater than $1/2$ for $\omega \geq 10 \log(\log(n))$.

\begin{figure}[H]
	\centering
    \includegraphics[width=0.475\textwidth]{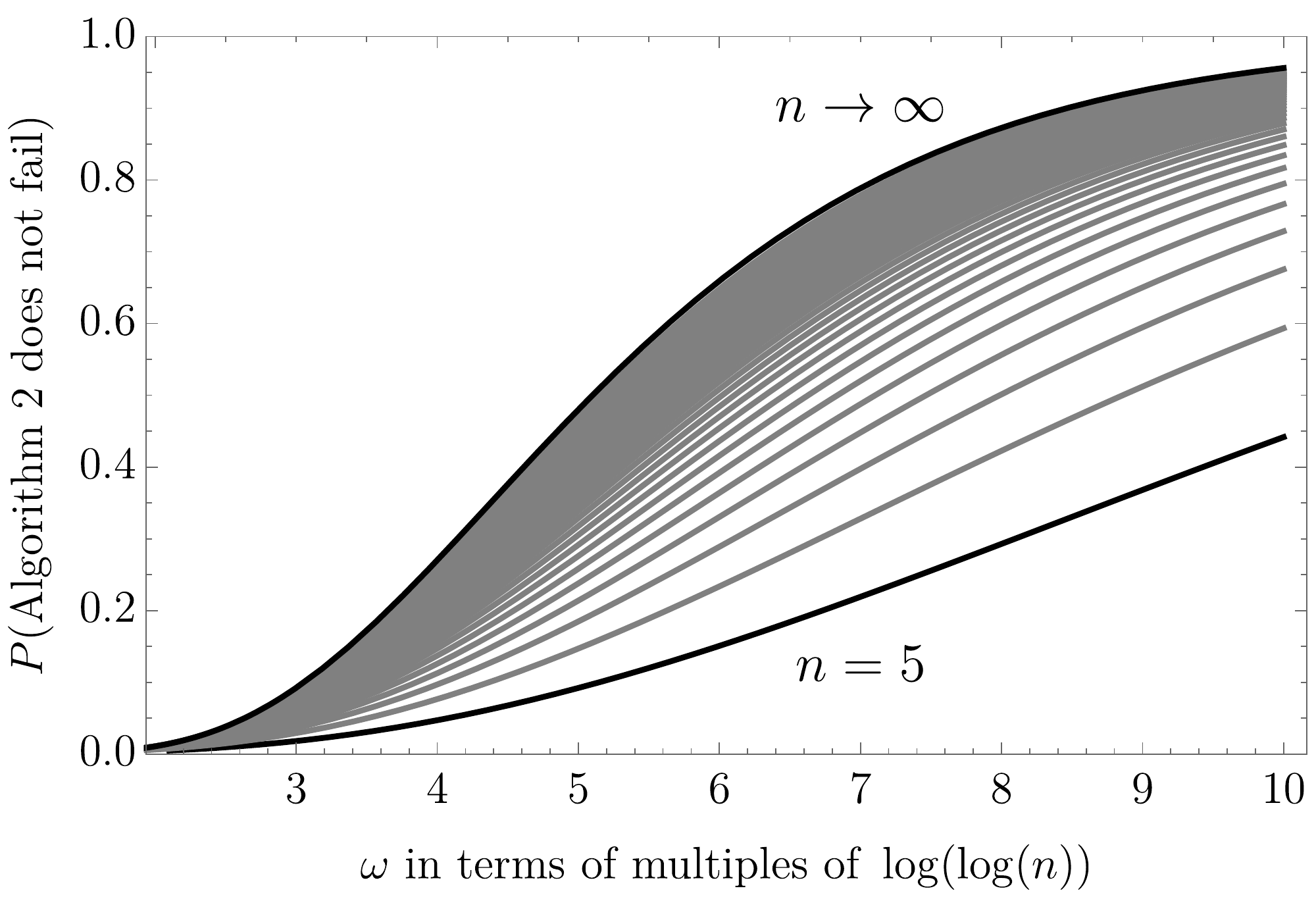}
    \caption{$\omega = 10(\log(\log(n))$ is a suitable choice for $n \geq 5$ to ensure that our overall quantum algorithm succeeds with a decent chance.}
    \label{fig:psucc}
\end{figure}

\end{document}